\theoremstyle{definition}
\newtheorem{theorem}{Theorem}
\newtheorem{lemma}{Lemma}
\newtheorem{definition}{Definition}
\newtheorem{remark}{Remark}
\newtheorem{example}{Example}
\title{Capacity of Some Index Coding Problems with Symmetric Neighboring Interference}
\begin{document}

\author{Mahesh~Babu~Vaddi~and~B.~Sundar~Rajan\\ 
 Department of Electrical Communication Engineering, Indian Institute of Science, Bengaluru 560012, KA, India \\ E-mail:~\{mahesh,~bsrajan\}@ece.iisc.ernet.in }
 
\maketitle
%%%%%%%%%%%%%%%%%%%%%%%
\begin{abstract}
A single unicast index coding problem (SUICP) with symmetric neighboring interference (SNI) has equal number of $K$ messages and $K$ receivers, the $k$th receiver $R_{k}$ wanting the $k$th message $x_{k}$ and having the side-information $\mathcal{K}_{k}=(\mathcal{I}_{k} \cup x_{k})^c,$ where ${I}_k= \{x_{k-U},\dots,x_{k-2},x_{k-1}\}\cup\{x_{k+1}, x_{k+2},\dots,x_{k+D}\}$ is the interference with $D$ messages after and $U$ messages before its desired message. Maleki, Cadambe and Jafar obtained the capacity of this symmetric neighboring interference single unicast index coding problem (SNI-SUICP) with $(K)$ tending to infinity and Blasiak, Kleinberg and Lubetzky for the special case of $(D=U=1)$ with $K$ being finite. In this work, for any finite $K$ and arbitrary $D$ we obtain the capacity for the case $U=\text{gcd}(K,D+1)-1.$ Our proof is constructive, i.e., we give an explicit construction of a linear index code achieving the capacity. 
%In our earlier work, we constructed binary matrices of size $m \times n (m\geq n)$ such that any $n$ adjacent rows of the matrix are linearly independent over every field. Calling these matrices as Adjacent Independent Row (AIR) matrices using which we gave an optimal scalar linear index code for the one-sided symmetric neighboring side information single unicast index coding problems.  In this work, we show that AIR matrix of size $K \times (D+1)$ can be used as an optimal length encoding matrix for SNI-SUICP with $K$ messages, $D$ interfering messages after and $U=\text{gcd}(K,D+1)-1$ interfering messages before the desired message. We prove the capacity of SNI-SUICP with $K$ messages, $D$ interfering messages after and $U=\text{gcd}(K,D+1)-1$ interfering messages before the desired message. 
\end{abstract}
%%%%%%%%%%%%%%%%%
\section{Introduction and Background}
\label{sec1}

\IEEEPARstart {A}{n} index coding problem, comprises a transmitter that has a set of $K$ independent messages, $X=\{ x_0,x_1,\ldots,x_{K-1}\}$, and a set of $M$ receivers, $R=\{ R_0,R_1,\ldots,R_{M-1}\}$. Each receiver, $R_k=(\mathcal{K}_k,\mathcal{W}_k)$, knows a subset of messages, $\mathcal{K}_k \subset X$, called its \textit{Known-set} or the \textit{side-information}, and demands to know another subset of messages, $\mathcal{W}_k \subseteq \mathcal{K}_k^\mathsf{c}$, called its \textit{Want-set} or \textit{Demand-set}. A naive technique would be to broadcast all the messages in $K$ time slots. Instead, the transmitter can take cognizance of the side-information of the receivers and broadcast coded messages, called the index code, over a noiseless channel. The objective is to minimize the number of coded transmissions, called the length of the index code, such that each receiver can decode its demanded message using its side-information and the coded messages.

The problem of index coding with side-information was introduced by Birk and Kol \cite{ISCO}. Ong and Ho \cite{OnH} classified the binary index coding problem depending on the demands and the side-information possessed by the receivers. An index coding problem is unicast if the demand sets of the receivers are disjoint. An index coding problem is single unicast if the demand sets of the receivers are disjoint and the cardinality of demand set of every receiver is one. Any unicast index problem can be converted into a single unicast index coding problem. A single unicast index coding problem (SUICP) can be described as follows: Let $\{x_{0}$,$x_{1}$,\ldots,$x_{K-1}\}$ be the $K$ messages, $\{R_{0}$,$R_{1},\ldots,R_{K-1}\}$ are $K$ receivers and $x_k \in \mathcal{A}$ for some alphabet $\mathcal{A}$ and $k=0,1,\ldots,K-1$. Receiver $R_{k}$ is interested in the message $x_{k}$ and knows a subset of messages in $\{x_{0}$,$x_{1}$,\ldots,$x_{K-1}\}$ as side-information. 

%In a unicast index coding problem, the side-information is represented by a directed graph $G$ = ($V$,$E$) with $V=\{x_0,x_1,x_2,\ldots,x_{K-1}\}$ vertices and $E$ is the set of edges such that the directed edge $(x_i,x_j)\in E$ if receiver $R_{i}$ knows $x_{j}$. This graph $G$ for a given index coding problem is called the side-information graph. For the unicast index coding problem, it was shown that the length of an optimal linear index code is equal to the minrank of the side-information graph \cite{YBJK} of the index coding problem and finding the minrank is NP hard \cite{minrank}. 

A solution (includes both linear and nonlinear) of the index coding problem must specify a finite alphabet $\mathcal{A}_P$ to be used by the transmitter, and an encoding scheme $\varepsilon:\mathcal{A}^{t} \rightarrow \mathcal{A}_{P}$ such that every receiver is able to decode the wanted message from the $\varepsilon(x_0,x_1,\ldots,x_{K-1})$ and the known information. The minimum encoding length $l=\lceil log_{2}|\mathcal{A}_{P}|\rceil$ for messages that are $t$ bit long ($\vert\mathcal{A}\vert=2^t$) is denoted by $\beta_{t}(G)$. The broadcast rate of the index coding problem with side-information graph $G$ is defined \cite{ICVLP} as,
\begin{align*}
\beta(G) \triangleq   \inf_{t} \frac{\beta_{t}(G)}{t}.
\end{align*}

If $t = 1$, it is called scalar broadcast rate. For a given index coding problem, the broadcast rate $\beta(G)$ is the minimum number of index code symbols required to transmit to satisfy the demands of all the receivers. The capacity $C(G)$ for the index coding problem is defined as the maximum number of message symbols transmitted per index code symbol such that every receiver gets its wanted message symbols and all the receivers get equal number of wanted message symbols. The broadcast rate and capacity are related as 
\begin{center}	
$C(G)=\dfrac{1}{\beta(G)}$.
\end{center}  

Instead of one transmitter and $K$ receivers, the SUICP can also be viewed as $K$ source-receiver pairs with all $K$ sources connected with all $K$ receivers through a common finite capacity channel and all source-receiver pairs connected with either zero of infinite capacity channels. This problem is called multiple unicast index coding problem in \cite{MCJ}.

In a symmetric neighboring interference single unicast index coding problem (SNI-SUICP) with equal number of $K$ messages and receivers, each receiver has interfering messages, corresponding to the $D$ messages after and $U$ messages before its desired message. In this setting, the $k$th receiver $R_{k}$ demands the message $x_{k}$ having the interference
%%%%%%%%
\begin{equation}
\label{antidote}
{I}_k= \{x_{k-U},\dots,x_{k-2},x_{k-1}\}\cup\{x_{k+1}, x_{k+2},\dots,x_{k+D}\}.
\end{equation}

The side-information of this setting is given by
\begin{align}
\label{sideinformation}
\mathcal{K}_{k}=(\mathcal{I}_{k} \cup x_{k})^c.
\end{align}

Maleki \textit{et al.} \cite{MCJ} found the capacity of SNI-SUICP with $K\rightarrow \infty$ to be
%. The capacity of SNI-SUICP with $K\rightarrow \infty$ is 
\begin{align}
\label{capacityint}
C=\frac{1}{D+1}~\text{per~message}.
\end{align}

Also, it was shown in  \cite{MCJ} that  the outer bound for the capacity of SNI-SUICP for finite $K$ and is given by
\begin{align}
\label{outerbound}
C \leq \frac{1}{D+1}.
\end{align}
Blasiak \textit{et al.} \cite{ICVLP} found the capacity of SNI-SUICP with $U=D=1$ by using linear programming bounds to be $\frac{\left\lfloor \frac{K}{2}\right\rfloor}{K}$. 

Jafar \cite{TIM} established the relation between index coding problem and topological interference management problem. The SNI-SUICP is motivated by topological interference management problems. The capacity and optimal coding results in index coding can be used in corresponding topological interference management problems.

%%%%%%%%%%%%%%%%%%%%%%%%%%%%%%%%%%%%%%%%%%
\subsection{Contributions}
The contributions of this paper are summarized below:
\begin{itemize}
\item We derive the capacity of SNI-SUICP with $D$ interfering messages after and $U=\text{gcd}(K,D+1)-1$ interfering messages before the desired message.
\item We show that AIR matrices of size $K \times (D+1)$ can be used as an encoding matrix to generate optimal index code over every field.
\end{itemize}

All the subscripts in this paper are to be considered $~\text{\textit{modulo}}~ K$. In the remaining paper, we refer SNI-SUICP with $D$ interfering messages after and $U=\text{gcd}(K,D+1)-1$ interfering messages before the desired message as SNI-SUICP.

The remaining part of this paper is organized as follows. In Section \ref{sec2} we define and review the properties of Adjacent Row Independent (AIR) matrices which is already discussed in detail in \cite{VaR2} in the context of optimal index codes with symmetric, neighboring consecutive side-information. Except for the proof of Lemma \ref{lemma1} this section is a slightly modified version available in \cite{VaR3} and is present only for the sake of being self-contained. In \ref{sec3}, we show that AIR matrix can be used as an encoding matrix to generate optimal index code for SNI-SUICP. We conclude the paper in Section \ref{sec4}.

%
%\textit{The proofs of all theorems and claims in this paper have been omitted due to space constraints. All the proofs and more examples can be found in \cite{MRRarXiv} and \cite{BVRarXiv}.}
%%%%%%%%%%%%%%%%%%%%%%%%%%%%%%%%%%%%%%%%%%%%%%%%%%%%%%%%%%%%%%%%%%%%%%
%%%%%%%%%%%%%%%%%%%%%%%%%%%%%%%%%%
\section{Review of AIR matrices}
\label{sec2}

In \cite{VaR2}, we gave the construction of AIR matrix and we used AIR matrices to give optimal length index codes for one-sided symmetric neighboring and consecutive side-information index coding problems (SNC-SUICP). In \cite{VaR1}, we constructed optimal vector linear index codes for two-sided SNC-SUICP. In \cite{VaR3}, we gave a low-complexity decoding for SNC-SUICP with AIR matrix as encoding matrix. The low complexity decoding method helps to identify a reduced set of side-information for each users with which the decoding can be carried out. By this method every receiver is able to decode its wanted message symbol by simply adding some index code symbols (broadcast symbols).

Given $K$ and $D$ the $K \times (D+1)$  matrix obtained by Algorithm I is called the $(K,D)$ AIR matrix and it is denoted by $\mathbf{L}_{K\times (D+1)}.$ The general form of the $(K,D)$ AIR matrix is shown in   Fig. \ref{fig1}. It consists of several submatrices (rectangular boxes) of different sizes as shown in Fig.\ref{fig1}. The location and sizes of these submatrices are used subsequently to prove the main results in the following section Theorems \ref{thm1} and \ref{thm2}.

The description of the submatrices are as follows: Let $m$ and $n$ be two positive integers and $n$ divides $m$. The following matrix  denoted by $\mathbf{I}_{m \times n}$ is a rectangular  matrix.
\begin{align}
\label{rcmatrix}
\mathbf{I}_{m \times n}=\left.\left[\begin{array}{*{20}c}
   \mathbf{I}_n  \\
   \mathbf{I}_n  \\
   \vdots  \\
   %\mathbf{I}_\lambda   \\
   \mathbf{I}_n 
   \end{array}\right]\right\rbrace \frac{m}{n}~\text{number~of}~ \mathbf{I}_n~\text{matrices}
\end{align}
and $\mathbf{I}_{n \times m}$ is the transpose of $\mathbf{I}_{m \times n}.$ We will call the $\mathbf{I}_{m \times n}$ matrix the $(m \times n)$ identity matrix. 
%%%%%%%%%%%%%
%%%%%%%%%%%%% ALGORITHM AIR %%%%%%%%%%%%%%%%%%%%

		\begin{algorithm}
			{Algorithm to construct the AIR matrix $\mathbf{L}$ of size $K \times (D+1)$}
			\begin{algorithmic}[1]
				% \item $m$ and $n$ are integers such that $mK+n(D+1)=\text{gcd}(K,D+1)$.
				 \item []Given $K$ and $D$ let $\mathbf{L}=K \times (D+1)$ blank unfilled matrix.
				\item [Step 1]~~~
				\begin{itemize}
%				\item[\footnotesize{1.1:}] Let $\mathbf{M}=K \times (D+1)$ blank unfilled matrix in $\mathbf{L}$.
				\item[\footnotesize{1.1:}] Let $K=q(D+1)+r$ for $r < D+1$.
				\item[\footnotesize{1.2:}] Use $\mathbf{I}_{q(D+1) \times (D+1)}$ to fill the first $q(D+1)$ rows of the unfilled part of $\mathbf{L}$.
				\item[\footnotesize{1.3:}] If $r=0$,  Go to Step 3.
				\end{itemize}

				\item [Step 2]~~~
				\begin{itemize}
%				\item[\footnotesize{2.1:}] Let $\mathbf{M}=r \times (D+1)$ blank unfilled matrix in $\mathbf{L}$.
				\item[\footnotesize{2.1:}] Let $D+1=q^{\prime}r+r^{\prime}$ for $r^{\prime} < r$.
				\item[\footnotesize{2.2:}] Use $\mathbf{I}_{q^{\prime}r \times r}^{\mathsf{T}}$ to fill the first $q^{\prime}r$ columns of the unfilled part of $\mathbf{L}$.
			    \item[\footnotesize{2.3:}] If $r^{\prime}=0$,  go to Step 3.	
				\item[\footnotesize{2.4:}] $K\leftarrow r$ and $D+1\leftarrow r^{\prime}$.
				\item[\footnotesize{2.5:}] Go to Step 1.
				\end{itemize}
				\item [Step 3] Exit.
		
			\end{algorithmic}
			\label{algo1}
		\end{algorithm}

\begin{figure*}
\centering
\includegraphics[scale=0.58]{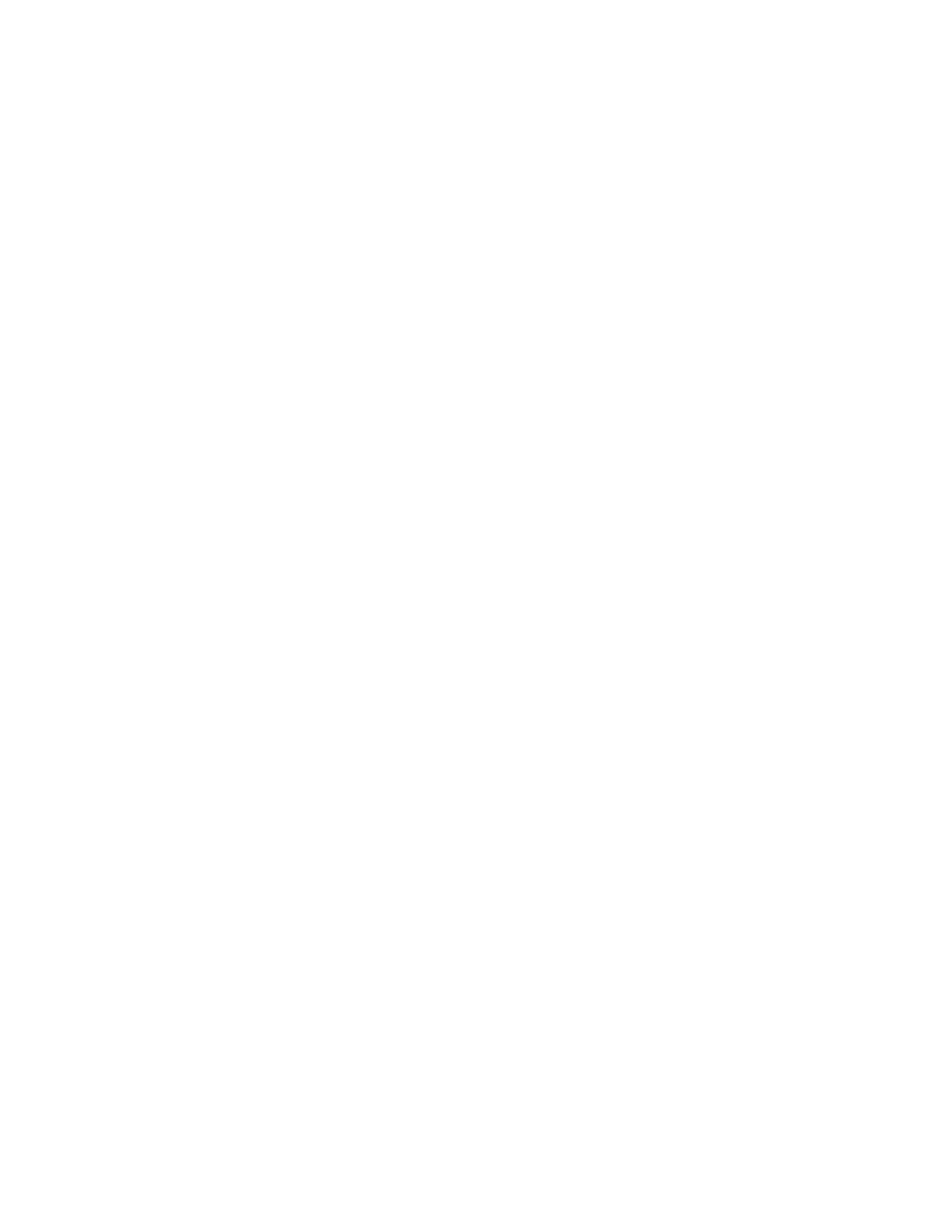}\\
~ $\mathbf{S}=\mathbf{I}_{\lambda_{l} \times \beta_l \lambda_{l}}$ if $l$ is even and ~$\mathbf{S}=\mathbf{I}_{\beta_l\lambda_{l} \times \lambda_{l}}$ otherwise.
\caption{AIR matrix of size $K \times (D+1)$.}
\label{fig1}
~ \\
\hrule
\end{figure*}
%%%%%%%%%%%%%

Towards explaining the other quantities shown in the AIR matrix shown in Fig. \ref{fig1}, for a given $K$  and $D,$ let  $\lambda_{-1}=D+1,\lambda_0=K-D-1$ and\begin{align}
\nonumber
D+1&=\beta_0 \lambda_0+\lambda_1, \nonumber \\
\lambda_0&=\beta_1\lambda_1+\lambda_2, \nonumber \\
\lambda_1&=\beta_2\lambda_2+\lambda_3, \nonumber \\
\lambda_2&=\beta_3\lambda_3+\lambda_4, \nonumber \\
&~~~~~~\vdots \nonumber \\
\lambda_i&=\beta_{i+1}\lambda_{i+1}+\lambda_{i+2}, \nonumber \\ 
&~~~~~~\vdots \nonumber \\ 
\lambda_{l-1}&=\beta_l\lambda_l.
\label{chain}
\end{align}
where $\lambda_{l+1}=0$ for some integer $l,$ $\lambda_i,\beta_i$ are positive integers and $\lambda_i < \lambda_{i-1}$ for $i=1,2,\ldots,l$. The number of submatrices in the AIR matrix is $l+2$ and the size of each submatrix is shown using $\lambda_i,\beta_i,$  $i \in [0:l].$ The submatrices are classified in to the following three types. 
%%%%%%%%%
\begin{itemize}
\item The first submatrix is the $I_{(D+1) \times (D+1)}$ matrix at the top of Fig. \ref{fig1} which is independent of $\lambda_i,\beta_i,$  $i \in [0:l].$ This will be referred as the $I_{(D+1)}$ matrix henceforth.
\item The set of matrices of the form $I_{\lambda_i \times \beta_i \lambda_i}$ for $i=0,2,4, \cdots$ (for all $i$ even) will be referred as the set of even-submatrices.
\item The set of matrices of the form $I_{\beta_i \lambda_i \times  \lambda_i}$ for $i=1,3,5, \cdots$ (for all $i$ odd) will be referred as the set of odd-submatrices. 
\end{itemize}
Note that the odd-submatrices are always "fat" and the even-submatrices are always "tall" including square matrices in both the sets. 
By the $i$-th submatrix is meant either an odd-submatrix or an even-submatrix for $ 0 \leq i \leq l.$  Also whenever  $\beta_0=0,$  the corresponding submatrix will not exist in the AIR matrix.  
%%%%%%%%%%%%%%%%%%%%%%
To prove the main result in the following section the location of both the odd- and even-submatrices within the AIR matrix need to be identified. Towards this end, we define the following intervals. Let $R_0,R_1,R_2,\ldots,R_{\left\lfloor \frac{l}{2}\right\rfloor+1}$ be the intervals that will identify the rows of the submatrices  as given below:

\begin{itemize}
\item $R_0=[0:K-\lambda_0-1]$
\item $R_1=[K-\lambda_0:K-\lambda_2-1]$
\item $R_2=[K-\lambda_2:K-\lambda_4-1]$
\item []~~~~~~~~~~~~~~$\vdots$
\item $R_i=[K-\lambda_{2(i-1)}:K-\lambda_{2i}-1]$
\item []~~~~~~~~~~~~~~$\vdots$
\item $R_{\left\lfloor
\frac{l}{2}\right\rfloor}=[K-\lambda_{2(\left\lfloor
\frac{l}{2}\right\rfloor-1)}:K-\lambda_{2\left\lfloor
\frac{l}{2}\right\rfloor}-1]$
\item $R_{\left\lfloor \frac{l}{2}\right\rfloor+1}=[K-\lambda_{2\left\lfloor \frac{l}{2}\right\rfloor}:K-1]$,
\end{itemize} 
we have $R_0\cup R_1 \cup R_2 \cup \ldots \cup R_{\left\lfloor \frac{l}{2}\right\rfloor +1}=[0:K-1]$.

Let $C_0,C_1,\ldots,C_{\left\lceil \frac{l}{2}\right\rceil}$ be the intervals that will identify the columns of the submatrices  as given below:
\begin{itemize}

\item $C_0=[0:\beta_0\lambda_0-1]$ if $\beta_0 \geq 1$, else $C_0=\phi$
\item $C_1=[D-\lambda_1+1:D-\lambda_3]$
\item $C_2=[D-\lambda_3+1:D-\lambda_5]$
\item [] ~~~~~~~~~~~$\vdots$
\item $C_i=[D-\lambda_{2i-1}+1:D-\lambda_{2i+1}]$
\item []~~~~~~~~~~~~~~$\vdots$
\item $C_{\left\lceil
\frac{l}{2}\right\rceil-1}=[D-\lambda_{2\left\lceil
\frac{l}{2}\right\rceil-3}+1:D-\lambda_{2\left\lceil
\frac{l}{2}\right\rceil-1}]$
\item $C_{\left\lceil \frac{l}{2}\right\rceil}=[D-\lambda_{2\left\lceil \frac{l}{2}\right\rceil-1}+1:D]$
\end{itemize} 
we have $C_0\cup C_1 \cup C_2 \cup \ldots \cup C_{\left\lceil \frac{l}{2}\right\rceil}=[0:D]$.

Let $\mathbf{L}$ be the AIR matrix of size $K \times (D+1)$. In the matrix $\mathbf{L}$, the element $\mathbf{L}(j,k)$ is present in one of the submatrices: $\mathbf{I}_{D+1}$ or $\mathbf{I}_{\beta_{2i+1}\lambda_{2i+1} \times \lambda_{2i+1}}$ for $i \in [0:\lceil \frac{l}{2}\rceil-1]$ or $\mathbf{I}_{\lambda_{2i} \times \beta_{2i}\lambda_{2i}}$ for $i \in [0:\left\lfloor\frac{l}{2}\right\rfloor]$. Let $(j_R,k_R)$ be the (row-column) indices of $\mathbf{L}(j,k)$ within the submatrix in which $\mathbf{L}(j,k)$ is present. Then, for a given $\mathbf{L}(j,k)$, the indices $j_R$ and $k_R$ are as given below.
\begin{itemize}
\item If $\mathbf{L}(j,k)$ is present in $\mathbf{I}_{D+1}$, then $j_R=j$ and $k_R=k$.
\item If $\mathbf{L}(j,k)$ is present in $\mathbf{I}_{\lambda_{0} \times \beta_{0}\lambda_{0}}$, then   $j_R=j~\text{\textit{mod}}~(D+1)$ and $k_R=k$. 
\item If $L(j,k)$ is present in $\mathbf{I}_{\beta_{2i+1}\lambda_{2i+1} \times \lambda_{2i+1}}$ for $i \in [0:\lceil \frac{l}{2}\rceil-1]$, then \\ $j_R=j~\text{\textit{mod}}~(K-\lambda_{2i})$ and $k_R=k~\text{\textit{mod}}~(D+1-\lambda_{2i+1})$.
\item If $L(j,k)$ is present in $\mathbf{I}_{\lambda_{2i} \times \beta_{2i}\lambda_{2i}}$ for $i \in [1:\left\lfloor\frac{l}{2}\right\rfloor]$, then $j_R=j~\text{\textit{mod}}~(K-\lambda_{2i})$ and $k_R=k~\text{\textit{mod}}~(D+1-\lambda_{2i-1})$. 
\end{itemize}

In Definition \ref{def1} below we define several distances between the $1$s present in an AIR matrix. These distances are used to prove that AIR matrix can be used as optimal length encoding matrix for SNI-SUICP.  Figure \ref{sfig44} is useful to visualize the distances defined.

\begin{definition}
%~~~~~~~~~~~~~~~~~~~~~~~~~~~~~~~~~~~~~~~~~~~~~~~~~~~~~~~~
\label{def1}
Let $\mathbf{L}$ be the AIR matrix of size $K \times (D+1)$.
\begin{itemize}
\item [\textbf{(i)}] For $k\in [0:D]$ we have  $\mathbf{L}(k,k)=1.$  Let $k^{\prime}$ be the maximum integer such that $k^{\prime} > k$ and $\mathbf{L}(k^{\prime},k)=1$. Then $k^{\prime}-k,$ denoted by $d_{down}(k),$  is called the down-distance of $\mathbf{L}(k,k)$.  
\item [\textbf{(ii)}] Let $\mathbf{L}(j,k)=1$ and $j \geq D+1$. Let $j^{\prime}$ be the maximum integer such that $j^{\prime} < j$ and $\mathbf{L}(j^{\prime},k)=1$. Then $j-j^{\prime},$ denoted by $d_{up}(j,k),$ is called the up-distance of $\mathbf{L}(j,k).$ 

\item [\textbf{(iii)}]  Let $\mathbf{L}(j,k)=1$ and $\mathbf{L}(j,k)\in \mathbf{I}_{ \lambda_{2i} \times \beta_{2i} \lambda_{2i}}$ for $i \in [0:\lfloor \frac{l}{2}\rfloor]$. Let $k^{\prime}$ be the minimum integer such that $k^{\prime} > k$ and $\mathbf{L}(j,k^{\prime})=1$. Then $k^{\prime}-k,$ denoted by $d_{right}(j,k),$  is called the right-distance of $\mathbf{L}(j,k).$  

\item [\textbf{(iv)}] For $k\in [0:D-\lambda_l]$, let $d_{right}(k+d_{down}(k),k)=\mu_k.$  Let the number of $1$s in the $(k+\mu_k)$th column of $\mathbf{L}$ below $\mathbf{L}(k+d_{down}(k),k+\mu_k)$ be $p_k$ and these are  at a distance of $t_{k,1},t_{k,2},\ldots,t_{k,p_k}~(t_{k,1}<t_{k,2}<\ldots <t_{k,p_k})$ from $\mathbf{L}(k+d_{down}(k),k+\mu_k)$. Then, $t_{k,r}$ is called the $r$-th  down-distance of $\mathbf{L}(k+d_{down}(k),k+\mu_k)$, for $1 \leq r \leq p_k.$
\end{itemize}
\end{definition}

Notice that if $L(k+d_{down}(k),k+\mu_k) \in \mathbf{I}_{\lambda_{2i} \times   \beta_{2i} \lambda_{2i}}$ in $\mathbf{L}$ for $i \in [0:\lfloor \frac{l}{2}\rfloor]$, then $p_k=0$.

%%%%%%%%%%%%%%%%%%%%%%%%%%%%%%%%%%%%
\begin{figure}
\centering
\includegraphics[scale=0.62]{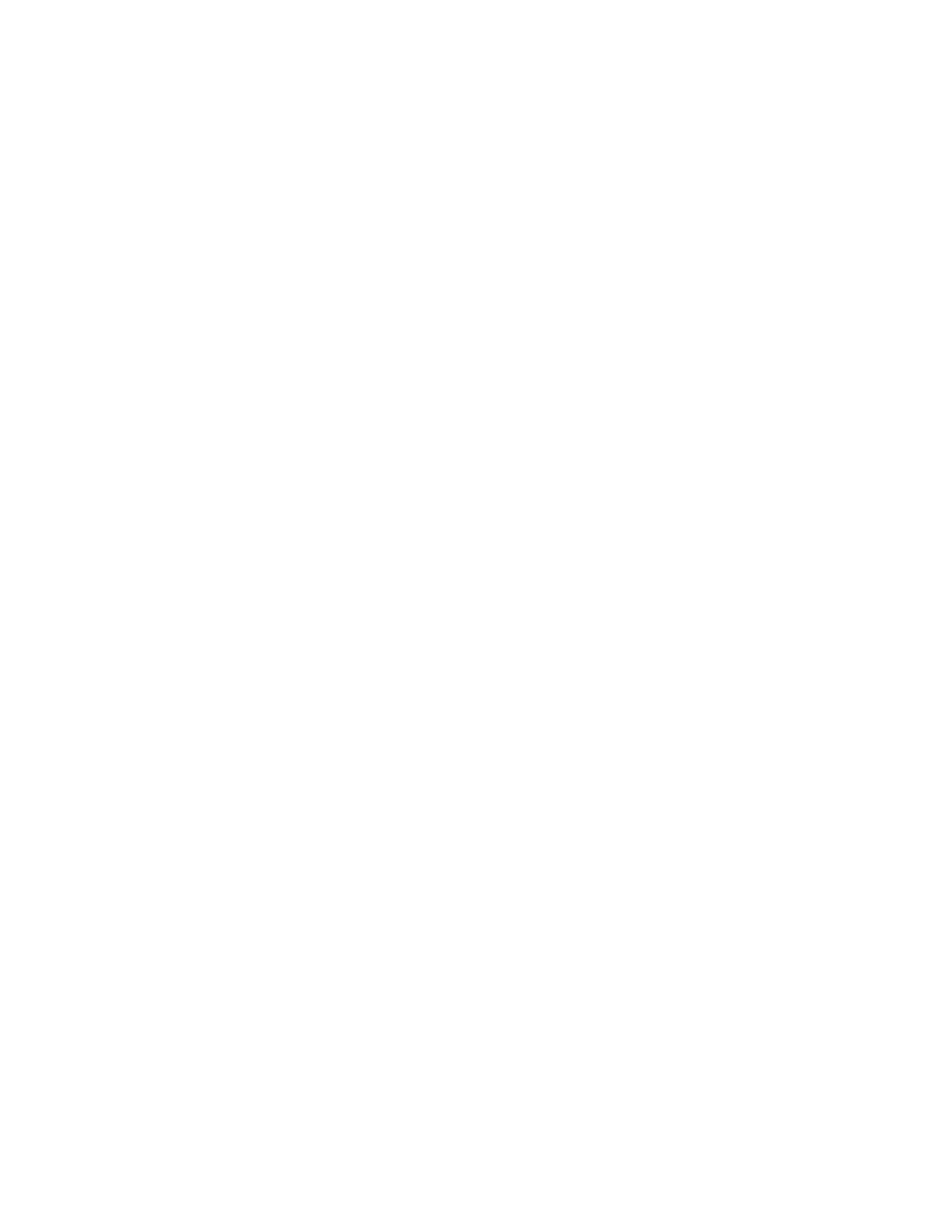}\\
\caption{Illustration of Definition \ref{def1}}
\label{sfig44}
\end{figure}
%%%%%%%%%%%%%%%%%%%%%%%%%%%%%%%%%%

\begin{lemma}
\label{lemma1}
Let $k \in C_i$ for $i \in [0:\lceil\frac{l}{2}\rceil]$. Let $k~\text{mod}~(D+1-\lambda_{2i-1})=c\lambda_{2i}+d$ for some positive integers $c$ and $d$ $(d<\lambda_{2i})$. The down distance is given by
\begin{align}
\label{mdd}
d_{down}(k)=K-D-1+\lambda_{2i+1}+(\beta_{2i}-1-c)\lambda_{2i}.
\end{align}
\end{lemma}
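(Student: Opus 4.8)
The plan is to compute the global row index of the lowest $1$ in column $k$ and subtract $k$. By Definition \ref{def1}(i) the entry $\mathbf{L}(k,k)=1$ sits in the top block $\mathbf{I}_{D+1}$, and $d_{down}(k)$ is precisely the gap from this diagonal $1$ down to the last $1$ of column $k$. Thus the whole problem collapses to locating one entry: the lowest $1$ in column $k$.

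First I would pin down which submatrix carries that lowest $1$. Using the chain relation $\lambda_{2i-1}=\beta_{2i}\lambda_{2i}+\lambda_{2i+1}$, the width of $C_i=[D-\lambda_{2i-1}+1:D-\lambda_{2i+1}]$ equals $\lambda_{2i-1}-\lambda_{2i+1}=\beta_{2i}\lambda_{2i}$, which is exactly the number of columns of the even-submatrix $\mathbf{I}_{\lambda_{2i}\times\beta_{2i}\lambda_{2i}}$. I would then record the two structural facts that do the real work: (a) the columns of this even-submatrix are exactly the interval $C_i$ (left end $D+1-\lambda_{2i-1}$), and (b) the row-reduction rule $j_R=j~\text{mod}~(K-\lambda_{2i})$ for this block forces it to occupy the bottom rows $[K-\lambda_{2i}:K-1]$, i.e. it is bottom-aligned with first row $K-\lambda_{2i}$ and last row $K-1$. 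Because the block reaches the bottom row $K-1$ and is a horizontal repetition $[\mathbf{I}_{\lambda_{2i}}\,|\cdots|\,\mathbf{I}_{\lambda_{2i}}]$ (one $1$ per column), its unique $1$ in column $k$ is automatically the lowest $1$ of that column in all of $\mathbf{L}$.

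With the block identified, the rest is bookkeeping through the index-reduction rules. Since the left end of $C_i$ is $D+1-\lambda_{2i-1}$, the local column index is $k_R=k-(D+1-\lambda_{2i-1})=k~\text{mod}~(D+1-\lambda_{2i-1})$; writing $k_R=c\lambda_{2i}+d$ places the column in the $(c+1)$-th copy of $\mathbf{I}_{\lambda_{2i}}$, so its $1$ sits at local row $d$, that is at global row $K-\lambda_{2i}+d$. Substituting $k=(D+1-\lambda_{2i-1})+c\lambda_{2i}+d$ and using the chain relation once more gives
\begin{align}
d_{down}(k) &= \big(K-\lambda_{2i}+d\big)-k \nonumber\\
&=(K-D-1)+\lambda_{2i-1}-(c+1)\lambda_{2i} \nonumber\\
&=(K-D-1)+\lambda_{2i+1}+(\beta_{2i}-1-c)\lambda_{2i}, \nonumber
\end{align}
which is \eqref{mdd}.

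I expect the main obstacle to be justifying fact (b) in full generality, namely extracting from Algorithm \ref{algo1} and Fig. \ref{fig1} that \emph{every} even-submatrix is bottom-aligned with first row $K-\lambda_{2i}$ and column-support exactly $C_i$, since this is the single property that forces its $1$ to be the globally lowest one in column $k$. I would also treat the boundary indices separately but briefly. For $i=0$ (and, when $\beta_0=0$, for $i=1$) the offset $D+1-\lambda_{2i-1}$ is $0$, which collapses the reduction to $k_R=k$ and resolves the otherwise ill-defined ``$\text{mod}~0$''. For the last interval $i=\lceil\frac{l}{2}\rceil$ with $l$ odd, the even-submatrix degenerates ($\lambda_{2i}=\lambda_{2i+1}=0$) and the lowest $1$ instead comes from the bottom-aligned final odd-submatrix $\mathbf{I}_{\beta_l\lambda_l\times\lambda_l}$; there the analogous computation (lowest $1$ in the last of the $\beta_l$ stacked copies) again telescopes, leaving $d_{down}(k)=K-D-1$, so that \eqref{mdd} continues to hold through its vanishing $\lambda$-terms.
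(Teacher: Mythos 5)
Your proposal is correct and follows essentially the same route as the paper's Appendix A: both proofs locate the lowest $1$ of column $k$ in the bottom-aligned submatrix (the even-submatrix $\mathbf{I}_{\lambda_{2i}\times\beta_{2i}\lambda_{2i}}$ in the generic case, the final odd-submatrix $\mathbf{I}_{\beta_l\lambda_l\times\lambda_l}$ when $l$ is odd and $i=\lceil\frac{l}{2}\rceil$) and compute the row gap, the paper by summing the three segment lengths $d_1+d_2+d_3$ read off its figure and you by computing the absolute row index $K-\lambda_{2i}+d$ and subtracting $k$ — an equivalent bookkeeping. Your explicit flagging of the bottom-alignment/column-support facts is a point the paper silently delegates to Fig.~\ref{afig1}, but it is not a different argument.
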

\begin{proof}
Proof is given in Appendix A.
\end{proof}

\begin{lemma}
\label{lemma2}
The up-distance of $\mathbf{L}(j,k)$ is as given below.
\begin{itemize}
\item If $\mathbf{L}(j,k) \in \mathbf{I}_{\beta_{2i+1} \lambda_{2i+1} \times \lambda_{2i+1}}$  for $i \in [0:\lceil \frac{l}{2}\rceil-1]$, then $d_{up}(j,k)$ is $\lambda_{2i+1}$. 
\item If $\mathbf{L}(j,k) \in \mathbf{I}_{\lambda_{2i} \times  \beta_{2i} \lambda_{2i}}$  for $i\in [0:\lfloor \frac{l}{2}\rfloor]$ and $k_R=c\lambda_{2i}+d$ for some positive integer $c,d~(d < \lambda_{2i})$, then $d_{up}(j,k)$ is $\lambda_{2i-1}-c\lambda_{2i}$. 
\end{itemize}
\end{lemma}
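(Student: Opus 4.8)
The plan is to read the up-distance off the local structure of whichever submatrix contains $\mathbf{L}(j,k)$, after using the "present in submatrix" rules to convert $(j,k)$ into the local coordinates $(j_R,k_R)$. Everything rests on two elementary facts about the building blocks shown in Fig.\ \ref{fig1}. An odd-submatrix $\mathbf{I}_{\beta_{2i+1}\lambda_{2i+1}\times\lambda_{2i+1}}$ is a vertical stack of $\beta_{2i+1}$ copies of $\mathbf{I}_{\lambda_{2i+1}}$, so within each of its columns the $1$s recur with vertical period exactly $\lambda_{2i+1}$. An even-submatrix $\mathbf{I}_{\lambda_{2i}\times\beta_{2i}\lambda_{2i}}$ is a horizontal juxtaposition of $\beta_{2i}$ copies of $\mathbf{I}_{\lambda_{2i}}$, so each of its columns contains exactly one $1$. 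These two facts decide where the nearest $1$ above $\mathbf{L}(j,k)$ can possibly lie, and I would record them first.

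In Case 1 (odd-submatrix), whenever $\mathbf{L}(j,k)$ lies below the top copy of $\mathbf{I}_{\lambda_{2i+1}}$ the nearest $1$ above it is the corresponding $1$ in the previous copy, at distance $\lambda_{2i+1}$, which is the claim. The only point needing work is the top copy, where the nearest $1$ above leaves the submatrix. Here I would show that the rows immediately above the odd-submatrix, restricted to its columns, are themselves filled by identity blocks lying on the same diagonals (coming from $\mathbf{I}_{D+1}$ or from a higher submatrix), so that in each such column a $1$ sits exactly $\lambda_{2i+1}$ rows above the top of the stack; the chain relation $\lambda_{2i-1}=\beta_{2i}\lambda_{2i}+\lambda_{2i+1}$ together with the row-intervals $R_i$ is what pins down this offset.

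In Case 2 (even-submatrix), since each of its columns carries a single $1$, the nearest $1$ above $\mathbf{L}(j,k)$ must belong to the submatrix sitting \emph{directly above} in column $k$ --- the odd-submatrix $\mathbf{I}_{\beta_{2i-1}\lambda_{2i-1}\times\lambda_{2i-1}}$ when $i\geq1$, and the block $\mathbf{I}_{D+1}$ (with $\lambda_{-1}=D+1$) when $i=0$. Writing $k_R=c\lambda_{2i}+d$ fixes the row of the even-submatrix $1$ (it depends only on $d$) and places column $k$ in the $c$th horizontal copy; passing from the $c=0$ copy to the $c$th copy advances the matrix-column by $c\lambda_{2i}$, and because the submatrix above has unit diagonal slope its nearest $1$ above descends by $c\lambda_{2i}$ rows, moving closer to the fixed even-submatrix $1$ by the same amount. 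Subtracting this from the $c=0$ value $\lambda_{2i-1}$ gives $d_{up}(j,k)=\lambda_{2i-1}-c\lambda_{2i}$. I expect this last step to be the main obstacle: one must compute the absolute rows of the two relevant $1$s through the $R_i$ intervals and the chain recursion and verify that their difference collapses exactly to $\lambda_{2i-1}-c\lambda_{2i}$. Since this is the same type of positional bookkeeping carried out for the down-distance, I would set it up in the same coordinates as Lemma \ref{lemma1} and reuse that computation wherever the positions of the $1$s coincide, rather than recomputing them from scratch.
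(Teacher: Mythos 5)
Your plan is correct. The two structural facts you isolate (vertical period $\lambda_{2i+1}$ inside an odd-submatrix; a single $1$ per column of an even-submatrix, forcing the nearest $1$ above into the block directly overhead) are exactly what the lemma rests on, and your bookkeeping for the boundary cases checks out: for the top copy of $\mathbf{I}_{\beta_{2i+1}\lambda_{2i+1}\times\lambda_{2i+1}}$ the nearest $1$ above lies in the bottom copy of the odd-submatrix of index $2i-1$ (or in $\mathbf{I}_{D+1}$ when $i=0$, with $\lambda_{-1}=D+1$), and the column-offset difference $\lambda_{2i-1}-\lambda_{2i+1}=\beta_{2i}\lambda_{2i}$ makes the row gap collapse to $\lambda_{2i+1}$; likewise your $c=0$ baseline of $\lambda_{2i-1}$ minus the $c\lambda_{2i}$ diagonal descent gives $\lambda_{2i-1}-c\lambda_{2i}$ in the even case. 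Note that this paper does not actually contain a proof of Lemma \ref{lemma2} --- it only cites Appendix C of \cite{VaR3} --- so no line-by-line comparison is possible here; but your argument is the same kind of positional computation the paper carries out for Lemma \ref{lemma1} in Appendix A ($d_{down}=d_1+d_2+d_3$ read off the block layout), and is the natural route. To turn the sketch into a full proof you only need to write out the absolute row indices of the two relevant $1$s via the intervals $R_i$, $C_i$ and verify no intermediate $1$ occurs, which your outline already anticipates.
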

\begin{proof}
Proof  is available in Appendix C of \cite{VaR3}.
\end{proof}

\begin{lemma}
\label{lemma3}
The right-distance of $\mathbf{L}(j,k)$ is as given below.
\begin{itemize}
\item If $k_R \in [0:(\beta_{2i}-1)\lambda_{2i}-1]$ for $i \in [0:\lfloor \frac{l}{2}\rfloor]$, then $d_{right}(j,k)$ is $\lambda_{2i}$. 
\item If $k_R \in [(\beta_{2i}-1)\lambda_{2i}:\beta_{2i} \lambda_{2i}-1]$ for $i \in [0:\lfloor \frac{l}{2}\rfloor-1]$, then $d_{right}(j,k)$ depends on $j_R$. If $j_R=c\lambda_{2i+1}+d$ for some positive integers $c,d~(d<\lambda_{2i+1})$, then  $d_{right}(j,k)$ is $\lambda_{2i}-c\lambda_{2i+1}$. 
\end{itemize}
\end{lemma}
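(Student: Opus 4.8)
The plan is to read the right-distance straight off the explicit identity-block description of the submatrices, splitting according to where $k_R$ falls inside the even-submatrix. Recall from \eqref{rcmatrix} that the even-submatrix $\mathbf{I}_{\lambda_{2i}\times\beta_{2i}\lambda_{2i}}$ is the side-by-side concatenation of $\beta_{2i}$ copies of $\mathbf{I}_{\lambda_{2i}}$; hence in each of its rows the $1$s sit exactly $\lambda_{2i}$ columns apart, and the copy containing $\mathbf{L}(j,k)$ is the $\left(\lfloor k_R/\lambda_{2i}\rfloor+1\right)$-th one. Because any entry of this submatrix has $j_R$ confined to its row-extent $[0:\lambda_{2i}-1]$, the $1$ sitting in the last copy lies on the local diagonal, i.e.\ at offset $j_R$ inside that copy, so $k_R=(\beta_{2i}-1)\lambda_{2i}+j_R$.

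If $k_R\in[0:(\beta_{2i}-1)\lambda_{2i}-1]$, then $\mathbf{L}(j,k)$ is not in the last copy, so the same row of the same even-submatrix carries another $1$ in the next copy at column $k+\lambda_{2i}$, while the identity pattern leaves every intermediate column of this row equal to $0$. Thus $k'=k+\lambda_{2i}$ is the least index exceeding $k$ with $\mathbf{L}(j,k')=1$, giving $d_{right}(j,k)=\lambda_{2i}$; this settles the first bullet.

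The main work is the case $k_R\in[(\beta_{2i}-1)\lambda_{2i}:\beta_{2i}\lambda_{2i}-1]$, where $\mathbf{L}(j,k)$ is in the last copy and the next $1$ must be sought in the columns immediately to the right of $C_i$. By the staircase recorded in the row- and column-interval lists, those columns restricted to the $\lambda_{2i}$ rows of the even-submatrix form the recursively constructed sub-block whose top $\beta_{2i+1}\lambda_{2i+1}$ rows are the odd-submatrix $\mathbf{I}_{\beta_{2i+1}\lambda_{2i+1}\times\lambda_{2i+1}}$ (a vertical stack of $\beta_{2i+1}$ copies of $\mathbf{I}_{\lambda_{2i+1}}$, hence one $1$ per row). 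Writing $j_R=c\lambda_{2i+1}+d$, the plan is to show that the first $1$ of this right-hand sub-block in row $j$ always appears at its local column $d=j_R\bmod\lambda_{2i+1}$: for $c=0$ this is the diagonal entry of the odd-submatrix, and for $c\geq 1$ the row has dropped below the odd-submatrix into the deeper even-submatrix $\mathbf{I}_{\lambda_{2i+2}\times\beta_{2i+2}\lambda_{2i+2}}$, whose leftmost $1$ in that row is again at local column $d$ by the relation $\lambda_{2i}=\beta_{2i+1}\lambda_{2i+1}+\lambda_{2i+2}$ (which forces the local row offset there to stay below $\lambda_{2i+2}<\lambda_{2i+1}$). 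Subtracting the column of $\mathbf{L}(j,k)$, namely $(\beta_{2i}-1)\lambda_{2i}+c\lambda_{2i+1}+d$ measured from the left edge of the even-submatrix, from the column of the target $1$, namely $\beta_{2i}\lambda_{2i}+d$, yields $d_{right}(j,k)=\lambda_{2i}-c\lambda_{2i+1}$.

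The hard part is exactly this last column count: one must verify, via the $j_R,k_R$ reduction rules and the endpoints $C_i=[D-\lambda_{2i-1}+1:D-\lambda_{2i+1}]$, that as $c$ runs over its admissible range the first $1$ met to the right never belongs to a still-deeper submatrix that could intervene, and that its local column is precisely $d$ independently of $c$, so that the target column is pinned while the source $1$ advances $\lambda_{2i+1}$ columns per unit increase of $c$. This computation is the row-column transpose of the up-distance argument of Lemma \ref{lemma2}, with the pair $(\lambda_{2i-1},\lambda_{2i})$ there replaced by $(\lambda_{2i},\lambda_{2i+1})$ here; I therefore expect to close it by an induction on the level $i$ of the Euclid-type recursion among the $\lambda_i,\beta_i$ that mirrors the nesting of the recursively constructed sub-blocks, exactly as in the proof of Lemma \ref{lemma2}.
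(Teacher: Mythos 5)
The paper itself gives no proof of Lemma \ref{lemma3} (it only cites Appendix D of \cite{VaR3}), so there is no in-paper argument to compare against; your strategy of reading the right-distance off the identity-block decomposition is the natural one, the first bullet is handled correctly, and the closing column count $(\beta_{2i}\lambda_{2i}+d)-((\beta_{2i}-1)\lambda_{2i}+c\lambda_{2i+1}+d)=\lambda_{2i}-c\lambda_{2i+1}$ is right.

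There is, however, a concrete error in your case split for the second bullet. You assert that $c=0$ puts the target $1$ on the diagonal of the odd-submatrix while ``for $c\geq 1$ the row has dropped below the odd-submatrix into the deeper even-submatrix.'' This is false whenever $\beta_{2i+1}\geq 2$: the odd-submatrix $\mathbf{I}_{\beta_{2i+1}\lambda_{2i+1}\times\lambda_{2i+1}}$ occupies local rows $[0:\beta_{2i+1}\lambda_{2i+1}-1]$, so every $c\in[0:\beta_{2i+1}-1]$ keeps row $j$ inside it (in the $(c+1)$-th stacked copy of $\mathbf{I}_{\lambda_{2i+1}}$), and only $c=\beta_{2i+1}$, with $d<\lambda_{2i+2}$, falls through to $\mathbf{I}_{\lambda_{2i+2}\times\beta_{2i+2}\lambda_{2i+2}}$. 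In Example \ref{ex6}, for instance, $\beta_1=2$, so for $i=0$ the rows with $c=1$ sit in the second stacked copy of $\mathbf{I}_{80}$ inside $\mathbf{I}_{160\times 80}$, not below it. The error is repairable, since for all $c\leq\beta_{2i+1}-1$ the unique $1$ of row $j$ in the odd-submatrix is at local column $j_R\bmod\lambda_{2i+1}=d$, which is exactly what your column count needs; but as written the intermediate values of $c$ are assigned to the wrong submatrix. Separately, your last paragraph defers the verification that no deeper block contributes a $1$ before local column $d$ to an unspecified induction; this should be carried out explicitly --- it is short, because the odd-submatrix occupies all $\lambda_{2i+1}$ columns to the right of $C_i$, and in the bottom $\lambda_{2i+2}$ rows the bound $d<\lambda_{2i+2}$ places the first $1$ inside the very first copy of $\mathbf{I}_{\lambda_{2i+2}}$, before any deeper submatrix begins.
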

\begin{proof}
Proof  is available in Appendix D of \cite{VaR3}.
\end{proof}

From Euclid algorithm and \eqref{chain}, we can write
\begin{align}
\label{chain2}
\lambda_l=\text{gcd}(K,D+1).
\end{align}
From \eqref{chain}, we have 
\begin{align}
\label{chain3}
\nonumber
&\lambda_0 > \lambda_1 >\ldots>\lambda_{2i}>\ldots>\lambda_l=\text{gcd}(K,D+1)~\text{and} \\& 
\lambda_{2i-1}-c\lambda_{2i} \geq \lambda_{2i+1} \geq \lambda_{l} = \text{gcd}(K,D+1)
\end{align}
for $i \in [0:\lfloor \frac{l}{2}\rfloor]$ and $c \leq \beta_{2i}$.

Define 
\begin{align*}
\tilde{C}_i=\{x+\lambda_0: \forall x \in C_i\}~\text{for}~i\in [0:\left \lceil\frac{l}{2}\right \rceil].
\end{align*}
That is, 
\begin{align}
\label{fact0}
\tilde{C}_i=[K-\lambda_{2i-1}:K-\lambda_{2i+1}-1].
\end{align}
 We have $C_0\cup C_1 \cup \ldots \cup C_{\left\lceil\frac{l}{2}\right\rceil}=[0:D-1]$, hence  
\begin{align*}
\tilde{C}_0 \cup \tilde{C}_1 \cup \ldots \cup \tilde{C}_{\left\lceil\frac{l}{2}\right\rceil}=[\lambda_0:K-1].
\end{align*}

%%%%%%%%%%%%%%%%%%%%%%%%%%%%%%%%%%%%%%%%%%%%%%%%%%%%%%%%%%%%%%%%%%
\section{Optimal Index Coding for SNI-SUICP by using AIR matrices}
\label{sec3}
A scalar linear index code of length $D+1$ generated by an AIR matrix of size $K \times (D+1)$ is given by
\begin{align}
%\label{code1}
 [c_0~c_1~\ldots~c_{D}]=[x_0~x_1~\ldots~x_{K-1}]\mathbf{L}=\sum_{k=0}^{K-1}x_kL_k
\end{align}
where $L_k$ is the $k$th row of $\mathbf{L}$ for $k\in[0:K-1]$. We prove that for $k \in [0:K-1]$, every receiver $R_{k}$ decodes its wanted message $x_{k}$ by using $[c_0~c_1~\ldots~c_D]$ and its side-information.

In this section we show that the AIR matrix with parameter $K$ and $D+1$ is an encoding matrix for the optimal length code for our SNI-SUICP.

%%%%%%%%%%%%%%%%%%%%%%%%%%%%
\begin{theorem}
\label{thm1}
Let $\mathbf{L}$ be the AIR matrix of size $K \times (D+1)$. The matrix $\mathbf{L}$ can be used as an encoding matrix for the SNI-SUICP with $K$ messages, $D$ interfering messages after and $U=\text{gcd}(K,D+1)-1$ interfering messages before the desired message.
\end{theorem}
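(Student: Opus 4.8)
The plan is to show that the length-$(D+1)$ scalar linear code $[c_0\ \cdots\ c_D]=[x_0\ \cdots\ x_{K-1}]\mathbf{L}$ lets every receiver recover its demand; together with the outer bound \eqref{outerbound} this is what pins the capacity at $\tfrac{1}{D+1}$. Fix a receiver $R_k$. Its side-information is $\{x_j : j\notin W_k\}$ with the contiguous window $W_k=[k-U:k+D]$ of size $U+D+1=D+\text{gcd}(K,D+1)$ (indices mod $K$). Since $c_m=\sum_j \mathbf{L}(j,m)\,x_j$, after subtracting the known contributions of all $x_j$ with $j\notin W_k$ the receiver holds the reduced symbols $\tilde c_m=\sum_{j\in W_k}\mathbf{L}(j,m)\,x_j$ for $m\in[0:D]$. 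Hence $x_k$ is decodable for every choice of the interfering symbols and over every field iff $L_k\notin\mathrm{span}\{L_j : j\in W_k\setminus\{k\}\}$, equivalently iff there is a decoding vector $\mathbf{w}\in\mathbb{F}^{D+1}$ with $L_k\mathbf{w}\neq 0$ and $L_j\mathbf{w}=0$ for all $j\in W_k\setminus\{k\}$. I would prove the theorem by exhibiting such a combination of the index symbols explicitly.

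First I would locate a pivot, namely a column $p$ in which row $k$ has a $1$ (its position is fixed by which block of $\mathbf{L}$ the row lies in), and start the decoding from $\tilde c_p$, which contains $x_k$. Using the down-distance of Lemma \ref{lemma1} I would enumerate the remaining $1$s of column $p$, and for each interfering $1$ that still lies inside $W_k$ I would cancel it by adding a neighbouring index symbol located through the right-distance of Lemma \ref{lemma3}. The up-distance (Lemma \ref{lemma2}) and the $r$-th down-distances of Definition \ref{def1}(iv) then track the secondary $1$s introduced by these added columns, which must themselves be cancelled or shown to lie outside $W_k$. The output of this chained cancellation is a combination of the $\tilde c_m$ whose coefficient of $x_k$ is nonzero and in which every $x_j$ with $j\in W_k\setminus\{k\}$ has net coefficient zero.

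The heart of the argument, and the step I expect to be the main obstacle, is the bookkeeping that shows this residual vanishing: every in-window $1$ that could survive is forced either to be cancelled by an added column (the forward interferers, which the construction removes or pushes to forward distance at least $D+1>D$) or never to appear at all (the backward interferers). This is exactly where $U=\text{gcd}(K,D+1)-1$ is tight. By \eqref{chain2} we have $\lambda_l=\text{gcd}(K,D+1)$, and the spacings produced by Lemmas \ref{lemma1}--\ref{lemma3} are all governed by the quantities $\lambda_{2i+1}$ and $\lambda_{2i-1}-c\lambda_{2i}$, each bounded below by $\lambda_l$ through \eqref{chain3}. Consequently the nearest $1$ reached on the backward (wrap-around) side sits at cyclic distance at least $\lambda_l=\text{gcd}(K,D+1)=U+1$ from $k$, i.e. at a row $\le k-U-1$, which is just outside $W_k$; a window one step wider (that is, $U=\text{gcd}(K,D+1)$) would place an interferer exactly at $k-U$ and break decodability, which is what makes $U=\text{gcd}(K,D+1)-1$ the exact threshold. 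The remaining work is the case analysis over which interval $C_i$ the pivot column falls in and which block row $k$ occupies, together with the verification that the chained cancellation terminates with the claimed residual; this is routine given Lemmas \ref{lemma1}--\ref{lemma3}, but it is where essentially all of the computation lives.
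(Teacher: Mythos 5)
Your plan coincides with the paper's proof in all essentials: the paper also decodes by exhibiting, for each $k$, an explicit small set of broadcast symbols whose sum contains $x_k$ plus only side-information, locating those symbols through the down-, up- and right-distances of Lemmas \ref{lemma1}--\ref{lemma3}, and it is exactly the bound $\lambda_{2i-1}-c\lambda_{2i}\geq \lambda_l=\text{gcd}(K,D+1)=U+1$ from \eqref{chain3} that keeps the backward residuals outside the interference window, as you say. So the strategy is right and the role of $U=\text{gcd}(K,D+1)-1$ is correctly identified.

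The gap is that what you label ``routine'' is the entire proof. The paper's Appendix B is a four-case analysis ($k\in[0:\lambda_0-1]$, $k\in\tilde{D}_i$, $k\in\tilde{E}_i$ for $i<\lceil l/2\rceil$, and $k\in\tilde{E}_{\lceil l/2\rceil}$), and at least two of its steps are genuinely nontrivial rather than bookkeeping. First, the fact that the pivot symbol $c_{k^{\prime}+\mu_{k^{\prime}}}$ contains $x_k$ at all, and contains exactly one (Case ii) or exactly $p_{k^{\prime}}+1$ (Case iii) forward interferers and no others, rests on the identity $d_{down}(k^{\prime})-d_{up}(k^{\prime}+d_{down}(k^{\prime}),k^{\prime}+\mu_{k^{\prime}})=K-D-1=\lambda_0$ (equations \eqref{fact3} and \eqref{fact8}), obtained by combining Lemmas \ref{lemma1} and \ref{lemma2} in a specific way; your sketch asserts the conclusion but does not produce this identity. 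Second, your ``chained cancellation'' must be shown to terminate: each added symbol $c_{k^{\prime}+t_{k^{\prime},r}}$ could in principle introduce new in-window $1$s requiring further cancellation. The paper closes this by proving $k^{\prime}_R+t_{k^{\prime},p_{k^{\prime}}}<\beta_{2i}\lambda_{2i}$ (equations \eqref{fact9} and \eqref{relation3}), which forces the relevant entries into an even-submatrix where $p=0$, so the chain stops at depth one. Without these two computations the proposal is a correct outline of the paper's argument rather than a proof of the theorem.
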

\begin{proof}
Proof is given in Appendix B.
\end{proof}

%%%%%%%%%%%%%%%%%%%%%%%%%%%%%%%%%%%%%%%%%%%%%%%%%%%%%%%
\begin{theorem}
\label{thm2}
The capacity of SNI-SUICP with $K$ messages, $D$ interfering messages after and $U=\text{gcd}(K,D+1)-1$ interfering messages before the desired message is $\frac{1}{D+1}$.
\end{theorem}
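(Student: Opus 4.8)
The plan is to prove $C(G)=\frac{1}{D+1}$ by sandwiching the broadcast rate $\beta(G)$ between a matching converse and achievability bound, using the duality $C(G)=\frac{1}{\beta(G)}$ recorded earlier. The outer bound \eqref{outerbound} from \cite{MCJ} already supplies $C(G)\le\frac{1}{D+1}$ for finite $K$, equivalently $\beta(G)\ge D+1$. Hence everything reduces to exhibiting one valid index code of length $D+1$ for scalar messages ($t=1$), which forces $\beta(G)\le D+1$ from above and closes the gap.

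For the achievability I would invoke Theorem \ref{thm1}: the $(K,D)$ AIR matrix $\mathbf{L}$ of size $K\times(D+1)$ is a valid encoding matrix for this SNI-SUICP. The scalar linear code $[c_0~c_1~\ldots~c_{D}]=[x_0~x_1~\ldots~x_{K-1}]\mathbf{L}$ transmits exactly $D+1$ coded symbols, and by Theorem \ref{thm1} every receiver $R_k$ recovers its single desired symbol $x_k$ from $[c_0~c_1~\ldots~c_{D}]$ together with its side-information $\mathcal{K}_k$. Taking messages of length $t=1$, this gives $\beta_1(G)\le D+1$, so $\beta(G)=\inf_t\frac{\beta_t(G)}{t}\le D+1$, i.e. $C(G)\ge\frac{1}{D+1}$. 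Combining with the converse yields $\beta(G)=D+1$ and therefore $C(G)=\frac{1}{D+1}$, completing the argument in a few lines.

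The one subtlety I would flag is confirming that the outer bound \eqref{outerbound} applies for the specific choice $U=\text{gcd}(K,D+1)-1$. This is immediate: that bound is driven by the $D$ interfering messages following the desired message, and enlarging the interference set (increasing $U$) can only shrink each side-information set and hence only decrease capacity, so $C(G)\le\frac{1}{D+1}$ persists for our $U$.

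The genuine difficulty does not lie in Theorem \ref{thm2} itself but in the achievability input, Theorem \ref{thm1}, whose proof is deferred to the appendix and must verify decodability for \emph{every} receiver. I expect that to be the main obstacle: one must show that for each $k$ the symbol $x_k$ can be isolated from the $D+1$ broadcast symbols after cancelling exactly the contributions of the messages in $\mathcal{K}_k$, with no residual interference from $\mathcal{I}_k$. This is precisely where the structural distance results control the positions of the $1$s of $\mathbf{L}$---the down-distance of Lemma \ref{lemma1}, the up-distance of Lemma \ref{lemma2}, and the right-distance of Lemma \ref{lemma3}---so that the neighboring interference pattern aligns with the column support of the code. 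Once Theorem \ref{thm1} is established, Theorem \ref{thm2} follows immediately from the bounds above.
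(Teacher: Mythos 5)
Your proposal matches the paper's own proof: achievability via the length-$(D+1)$ AIR-matrix code of Theorem \ref{thm1} combined with the converse \eqref{outerbound} from \cite{MCJ}, giving $C(G)=\frac{1}{D+1}$. Your extra remark that the outer bound persists when $U$ is enlarged (since more interference only shrinks side-information) is a sensible sanity check the paper leaves implicit, but the argument is otherwise identical.
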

\begin{proof}
In Theorem \ref{thm1}, we proved that AIR of size $K \times (D+1)$ can be used as an encoding matrix for this SNI-SUICP. The rate achieved by using AIR matrix is $\frac{1}{D+1}$. From \eqref{outerbound}, the rate of SNI-SUICP is always greater than or equal to $\frac{1}{D+1}$. Hence, the capacity of SNI-SUICP with $K$ messages, $D$ interfering messages after and $U=\text{gcd}(K,D+1)-1$ interfering messages before the desired message is $\frac{1}{D+1}$.
\end{proof}
%%%%%%%%%%%%%%%%%%%%%%
\begin{remark}
Let $\tau_k$ be the set of broadcast symbols used by receiver $R_k$ to decode $x_k$. The number of broadcast symbols used by receiver $R_k$ by using AIRM as encoding matrix is given below:
\begin{itemize}
\item If $k \in [0:\lambda_0-1]$, then $|\tau_k|=1$.
\item If $k \in \tilde{D}_i$ for $i \in [0:\left\lceil\frac{l}{2}\right\rceil]$, then $|\tau_k|=2$
\item If $k \in \tilde{E}_i$ for $i \in [0:\left\lceil\frac{l}{2}\right\rceil-1]$, then $|\tau_k|=p_{k\prime}+2$, where $k^{\prime}=k-\lambda_0$ and $p_{k^\prime}$ is the number of $1$s below $\mathbf{L}(k+d_{down}(k^{\prime}),k^\prime+d_{right}(k^\prime+d_{down}(k^\prime),k^\prime))$ in AIR matrix.
\item If $k \in \tilde{E}_i$ for $i=\left\lceil\frac{l}{2}\right\rceil$, then $|\tau_k|=1$.
\end{itemize}
\end{remark}
%%%%%%%%%%%%%%%%%%
%%%%%%%%%%%%%%%%%%%%%%
\begin{remark}
Let $\gamma_k$ be the set of side-information used by receiver $R_k$ to decode $x_k$. Let $N_k$ be the number of message symbols present in $c_k$ for $k \in [0:D]$. The number of side-information used by receiver $R_k$ by using AIRM as encoding matrix is given below:
\begin{itemize}
\item If $k \in [0:\lambda_0-1]$, then $|\gamma_k|=N_{k~\text{mod}~(D+1)}-1$.
\item If $k \in \tilde{D}_i$ for $i \in [0:\left\lceil\frac{l}{2}\right\rceil]$, then $|\gamma_k|=N_{k^\prime}+N_{k^\prime+\mu_{k^\prime}}-3$, where $k^{\prime}=k-\lambda_0$.
\item If $k \in \tilde{E}_i$ for $i \in [0:\left\lceil\frac{l}{2}\right\rceil-1]$, then $|\gamma_k|=N_{k^\prime}+N_{k^\prime+\mu_{k^\prime}}+\sum_{j=1}^{p_{k^\prime}} N_{k^\prime+t_{k^\prime,j}}-2p_{k^\prime}-3$.
\item If $k \in \tilde{E}_i$ for $i=\left\lceil\frac{l}{2}\right\rceil$, then  $|\gamma_k|=N_{k^\prime}-1$.
\end{itemize}
\end{remark}
%%%%%%%%%%%%%%%%%%

\begin{example}
\label{ex1}
Consider a SNI-SUICP with $K=12,D=7,U=3$. The capacity of this SNI-SUICP is $\frac{1}{8}$. AIRM of size $12 \times 8$ can be used as an optimal length encoding matrix for this SNI-SUICP. The encoding matrix $\mathbf{L}_{12 \times 8}$ is given below. The code symbols and side-information used by each receiver to decode its wanted message is given in Table \ref{table1}.

\arraycolsep=0.5pt
\setlength\extrarowheight{-3.0pt}
{
$$\mathbf{L}_{12 \times 8}=\left[
\begin{array}{cccccccccc}
1 & 0 & 0 & 0 & 0 & 0 & 0 & 0\\
0 & 1 & 0 & 0 & 0 & 0 & 0 & 0\\
0 & 0 & 1 & 0 & 0 & 0 & 0 & 0\\
0 & 0 & 0 & 1 & 0 & 0 & 0 & 0\\
0 & 0 & 0 & 0 & 1 & 0 & 0 & 0\\
0 & 0 & 0 & 0 & 0 & 1 & 0 & 0\\
0 & 0 & 0 & 0 & 0 & 0 & 1 & 0\\
0 & 0 & 0 & 0 & 0 & 0 & 0 & 1\\
1 & 0 & 0 & 0 & 1 & 0 & 0 & 0\\
0 & 1 & 0 & 0 & 0 & 1 & 0 & 0\\
0 & 0 & 1 & 0 & 0 & 0 & 1 & 0\\
0 & 0 & 0 & 1 & 0 & 0 & 0 & 1\\
 \end{array}
\right]$$
}

\begin{table}[h]
\centering
\setlength\extrarowheight{0pt}
\begin{tabular}{|c|c|c|c|c|c|c|c|}
\hline
\textbf{$R_k$} &$\mathcal{W}_k$&$D_{max}(k)$&$\mu_k$&$\mu_{k^{\prime}}$&$t_{k^{\prime},1}$&$\tau_k$& $\gamma_k$ \\
\hline
$R_0$ & $x_0$ & 8&4&-&-&$c_0$&$x_8$ \\
\hline
$R_1$ & $x_1$ & 8&4&-&-&$c_1$&$x_{9}$ \\
\hline
$R_2$ & $x_2$ & 8&4&-&-&$c_2$&$x_{10}$ \\
\hline
$R_3$ & $x_3$ &8&4&-&-&$c_3$&$x_{11}$ \\
\hline
$R_4$ & $x_4$ & 4&-&4&-&$c_0,c_4$&$x_{0}$ \\
\hline
$R_5$ & $x_5$ & 4&-&4&-&$c_1,c_5$&$x_{1}$ \\
\hline
$R_6$ & $x_6$ & 4&-&4&-&$c_2,c_6$&$x_{2}$ \\
\hline
$R_7$ & $x_7$ & 4&-&4&-&$c_3,c_7$&$x_{3}$ \\
\hline
$R_8$ & $x_8$ &-&-&-&-&$c_8$&$x_{4}$ \\
\hline
$R_{9}$ & $x_{9}$ &-&-&-&-&$c_9$&$x_{5}$ \\
\hline
$R_{10}$ & $x_{10}$ & -&-&-&-&$c_{10}$&$x_{6}$ \\
\hline
$R_{11}$ & $x_{11}$ &-&-&-&-&$c_{11}$&$x_{7}$ \\
\hline
\end{tabular}
\vspace{5pt}
\caption{Decoding of SNI-SUICP given in Example \ref{ex1}. In this example $k^\prime=k-\lambda_0=4$.}
\label{table1}
\vspace{-5pt}
\end{table}
\end{example}
\begin{example}
\label{ex4}
Consider a SNI-SUICP with $K=33,D=20,U=2$. The capacity of this SNI-SUICP is $\frac{1}{21}$. AIRM of size $33 \times 21$ can be used as an optimal length encoding matrix for this SNI-SUICP. For this SNI-SUICP, $D+1=21, \lambda_1=9,\lambda_2=3,\beta_0=1,\beta_1=1,\beta_2=3,$ and $l=2$. The encoding matrix for this SNI-SUICP is shown in Fig. \ref{ex4matrix}. The code symbols and side-information used by each receiver to decode its wanted message is given in Table \ref{table3}.
\end{example}
\begin{figure*}[t]
\arraycolsep=0.9pt
\begin{small}
{
$$\mathbf{L}_{33 \times 21}=\left[
\begin{array}{ccccccccccccccccccccccccccc}
1 & 0 & 0 & 0 & 0 & 0 & 0 & 0 & 0 & 0 & 0 & 0& 0 & 0 & 0 & 0 & 0 & 0 & 0 & 0 & 0 \\
0 & 1 & 0 & 0 & 0 & 0 & 0 & 0 & 0 & 0 & 0 & 0& 0 & 0 & 0 & 0 & 0 & 0 & 0 & 0 & 0 \\
0 & 0 & 1 & 0 & 0 & 0 & 0 & 0 & 0 & 0 & 0 & 0& 0 & 0 & 0 & 0 & 0 & 0 & 0 & 0 & 0 \\
0 & 0 & 0 & 1 & 0 & 0 & 0 & 0 & 0 & 0 & 0 & 0& 0 & 0 & 0 & 0 & 0 & 0 & 0 & 0 & 0 \\
0 & 0 & 0 & 0 & 1 & 0 & 0 & 0 & 0 & 0 & 0 & 0& 0 & 0 & 0 & 0 & 0 & 0 & 0 & 0 & 0 \\
0 & 0 & 0 & 0 & 0 & 1 & 0 & 0 & 0 & 0 & 0 & 0& 0 & 0 & 0 & 0 & 0 & 0 & 0 & 0 & 0 \\
0 & 0 & 0 & 0 & 0 & 0 & 1 & 0 & 0 & 0 & 0& 0 & 0 & 0 & 0 & 0 & 0 & 0 & 0 & 0 & 0 \\
0 & 0 & 0 & 0 & 0 & 0 & 0 & 1 & 0 & 0 & 0 & 0& 0 & 0 & 0 & 0 & 0 & 0 & 0 & 0 & 0 \\
0 & 0 & 0 & 0 & 0 & 0 & 0 & 0 & 1 & 0 & 0 & 0& 0 & 0 & 0 & 0 & 0 & 0 & 0 & 0 & 0 \\
0 & 0 & 0 & 0 & 0 & 0 & 0 & 0 & 0 & 1 & 0 & 0& 0 & 0 & 0 & 0 & 0 & 0 & 0 & 0 & 0 \\
0 & 0 & 0 & 0 & 0 & 0 & 0 & 0 & 0 & 0 & 1 & 0& 0 & 0 & 0 & 0 & 0 & 0 & 0 & 0 & 0 \\
0 & 0 & 0 & 0 & 0 & 0 & 0 & 0 & 0 & 0 & 0 & 1& 0 & 0 & 0 & 0 & 0 & 0 & 0 & 0 & 0 \\
0 & 0 & 0 & 0 & 0 & 0 & 0 & 0 & 0 & 0 & 0 & 0& 1 & 0 & 0 & 0 & 0 & 0 & 0 & 0 & 0 \\
0 & 0 & 0 & 0 & 0 & 0 & 0 & 0 & 0 & 0 & 0 & 0& 0 & 1 & 0 & 0 & 0 & 0 & 0 & 0 & 0 \\
0 & 0 & 0 & 0 & 0 & 0 & 0 & 0 & 0 & 0 & 0 & 0& 0 & 0 & 1 & 0 & 0 & 0 & 0 & 0 & 0 \\
0 & 0 & 0 & 0 & 0 & 0 & 0 & 0 & 0 & 0 & 0 & 0& 0 & 0 & 0 & 1 & 0 & 0 & 0 & 0 & 0 \\
0 & 0 & 0 & 0 & 0 & 0 & 0 & 0 & 0 & 0 & 0 & 0& 0 & 0 & 0 & 0 & 1 & 0 & 0 & 0 & 0 \\
0 & 0 & 0 & 0 & 0 & 0 & 0 & 0 & 0 & 0 & 0 & 0& 0 & 0 & 0 & 0 & 0 & 1 & 0 & 0 & 0 \\
0 & 0 & 0 & 0 & 0 & 0 & 0 & 0 & 0 & 0 & 0 & 0& 0 & 0 & 0 & 0 & 0 & 0 & 1 & 0 & 0 \\
0 & 0 & 0 & 0 & 0 & 0 & 0 & 0 & 0 & 0 & 0 & 0& 0 & 0 & 0 & 0 & 0 & 0 & 0 & 1 & 0 \\
0 & 0 & 0 & 0 & 0 & 0 & 0 & 0 & 0& 0  & 0 & 0 & 0& 0 & 0 & 0 & 0 & 0 & 0 & 0 & 1 \\
\hline
1 & 0 & 0 & 0 & 0 & 0 & 0 & 0 & 0 & 0 & 0 &0&\vline 1 & 0& 0 & 0 & 0 & 0 & 0 & 0 & 0 \\
0 & 1 & 0 & 0 & 0 & 0 & 0 & 0 & 0 & 0 & 0 &0&\vline 0& 1 & 0 & 0 & 0 & 0 & 0 & 0 & 0 \\
0 & 0 & 1 & 0 & 0 & 0 & 0 & 0 & 0 & 0 & 0 &0&\vline 0& 0 & 1 & 0 & 0 & 0 & 0 & 0 & 0 \\
0 & 0 & 0 & 1 & 0 & 0 & 0 & 0 & 0 & 0 & 0 &0&\vline 0& 0 & 0 & 1 & 0 & 0 & 0 & 0 & 0 \\
0 & 0 & 0 & 0 & 1 & 0 & 0 & 0 & 0 & 0 & 0 &0&\vline 0& 0 & 0 & 0 & 1 & 0 & 0 & 0 & 0 \\
0 & 0 & 0 & 0 & 0 & 1 & 0 & 0 & 0 & 0 & 0 &0&\vline 0& 0 & 0 & 0 & 0 & 1 & 0 & 0 & 0 \\
0 & 0 & 0 & 0 & 0 & 0 & 1 & 0 & 0 & 0 & 0 &0&\vline 0& 0 & 0 & 0 & 0 & 0 &  1 & 0 & 0 \\
0 & 0 & 0 & 0 & 0 & 0 & 0 & 1 & 0 & 0 & 0 &0&\vline 0& 0 & 0 & 0 & 0 & 0 &  0 & 1 & 0 \\
0 & 0 & 0 & 0 & 0 & 0 & 0 & 0 & 1 & 0 & 0 &0&\vline 0& 0 & 0 & 0 & 0 & 0 & 0 & 0 &  1\\
0 & 0 & 0 & 0 & 0 & 0 & 0 & 0 & 0 & 1 & 0 &0&\vline \overline{1}& \overline{0} & \overline{0} & \overline{1} & \overline{0} & \overline{0} & \overline{1} & \overline{0} & \overline{0}  \\
0 & 0 & 0 & 0 & 0 & 0 & 0 & 0 & 0 & 0 & 1 &0&\vline 0& 1 & 0 & 0 & 1 & 0 & 0 & 1 &0  \\
0 & 0 & 0 & 0 & 0 & 0 & 0 & 0 & 0 & 0 & 0 &1&\vline 0& 0 & 1 & 0 & 0 & 1 & 0 & 0 &1  \\
 \end{array}
\right]$$
\caption{Encoding matrix for the SNC-SUICP in Example \ref{ex4}.}
\label{ex4matrix}
}
\end{small}
\end{figure*}
\begin{table*}
\centering
\setlength\extrarowheight{0pt}
\begin{tabular}{|c|c|c|c|c|c|c|c|c|}
\hline
\textbf{$R_k$} &$\mathcal{W}_k$&$D_{max}(k)$&$\mu_k$&$\mu_{k^{\prime}}$&$t_{k,1}$&$t_{k^{\prime},1}$& $\tau_k$&$\gamma_k$ \\
\hline
$R_0$ & $x_0$ & 21&12&~&9&~&$c_0$&$x_{21}$ \\
\hline
$R_1$ & $x_1$ & 21&12&~&9&~&$c_1$&$x_{22}$ \\
\hline
$R_2$ & $x_2$ & 21&12&~&9&~&$c_2$&$x_{23}$ \\
\hline
$R_3$ & $x_3$ & 21&12&~&6&~&$c_3$&$x_{24}$ \\
\hline
$R_4$ & $x_4$ & 21&12&~&6&~&$c_4$&$x_{25}$ \\
\hline
$R_5$ & $x_5$ & 21&12&~&6&~&$c_5$&$x_{26}$ \\
\hline
$R_6$ & $x_6$ & 21&12&~&3&~&$c_6$&$x_{27}$ \\
\hline
$R_7$ & $x_7$ & 21&12&~&3&~&$c_7$&$x_{28}$ \\
\hline
$R_8$ & $x_8$ &21&12&~&3&~&$c_8$&$x_{29}$ \\
\hline
$R_{9}$ & $x_{9}$ & 21&3&~&-&~&$c_9$&$x_{30}$ \\
\hline
$R_{10}$ & $x_{10}$ & 21&3&~&-&~&$c_{10}$&$x_{31}$ \\
\hline
$R_{11}$ & $x_{11}$ & 21&3&~&-&~&$c_{11}$&$x_{32}$ \\
\hline
$R_{12}$ & $x_{12}$ & 18&3&12&-&9&$c_{0},c_{9},c_{12}$&$x_{0},x_{9}$ \\
\hline
$R_{13}$ & $x_{13}$ & 18&3&12&-&9&$c_{1},c_{10},c_{13}$&$x_{1},x_{10}$ \\
\hline
$R_{14}$ & $x_{14}$ & 18&3&12&-&9&$c_{2},c_{11},c_{14}$&$x_{2},x_{11}$ \\
\hline
$R_{15}$ & $x_{15}$ & 15&3&12&-&6&$c_{3},c_{12},c_{15}$&$x_{3},x_{9}$ \\
\hline
$R_{16}$ & $x_{16}$ & 15&3&12&-&6&$c_{4},c_{13},c_{16}$&$x_{4},x_{10}$ \\
\hline
$R_{17}$ & $x_{17}$ & 15&3&12&-&6&$c_{5},c_{14},c_{17}$&$x_{5},x_{11}$ \\
\hline
$R_{18}$ & $x_{18}$ & 12&3&12&-&3&$c_{6},c_{15},c_{18}$&$x_{6},x_{9}$ \\
\hline
$R_{19}$ & $x_{19}$ & 12&3&12&-&3&$c_{7},c_{16},c_{19}$&$x_{7},x_{10}$ \\
\hline
$R_{20}$ & $x_{20}$ & 12&3&12&-&3&$c_{8},c_{17},c_{20}$&$x_{8},x_{11}$ \\
\hline
$R_{21}$ & $x_{21}$ &-&-&3&-&~&$c_{9},c_{12}$&$x_9,x_{12}$ \\
\hline
$R_{22}$ & $x_{22}$ &-&-&3&-&~&$c_{10},c_{13}$&$x_{10},x_{13}$ \\
\hline
$R_{23}$ & $x_{23}$ &-&-&3&-&~&$c_{11},c_{14}$&$x_{11},x_{14}$ \\
\hline
$R_{24}$ & $x_{24}$ &-&-&3&-&~&$c_{12},c_{15}$&$x_{12},x_{15},x_{21}$ \\
\hline
$R_{25}$ & $x_{25}$ &-&-&3&-&~&$c_{13},c_{16}$&$x_{13},x_{16},x_{22}$ \\
\hline
$R_{26}$ & $x_{26}$ &-&-&3&-&~&$c_{14},c_{17}$&$x_{14},x_{17},x_{23}$ \\
\hline
$R_{27}$ & $x_{27}$ &-&-&3&-&~&$c_{15},c_{18}$&$x_{15},x_{18},x_{24}$ \\
\hline
$R_{28}$ & $x_{28}$ &-&-&3&-&~&$c_{16},c_{19}$&$x_{16},x_{19},x_{25}$ \\
\hline
$R_{29}$ & $x_{29}$ &-&-&3&-&~&$c_{17},c_{20}$&$x_{17},x_{20},x_{26}$ \\
\hline
$R_{30}$ & $x_{30}$ &-&-&3&-&~&$c_{18}$&$x_{18},x_{27}$ \\
\hline
$R_{31}$ & $x_{31}$ &-&-&3&-&~&$c_{19}$&$x_{19},x_{28}$ \\
\hline
$R_{32}$ & $x_{32}$ &-&-&3&-&~&$c_{20}$&$x_{20},x_{29}$ \\
\hline
\end{tabular}
\vspace{5pt}
\caption{Decoding of SNI-SUICP given in Example \ref{ex4}. In this example $k^\prime=k-\lambda_0=12$.}
\label{table3}
\vspace{-5pt}
\end{table*}
\begin{example}
\label{ex5}
Consider a SNI-SUICP with $K=432,D=175,U=15$. The capacity of this SNI-SUICP is $\frac{1}{176}$. For this SNI-SUICP, we have $D+1=176, \lambda_1=176,\lambda_2=80,\lambda_3=16,\beta=0,\beta_1=1,\beta_2=2,\beta_3=5$ and $l=3$. AIRM of size $432 \times 176$ given in Fig. \ref{sfig5} can be used as an optimal length encoding matrix for this SNI-SUICP. 
\begin{figure}
\centering
\includegraphics[scale=0.4]{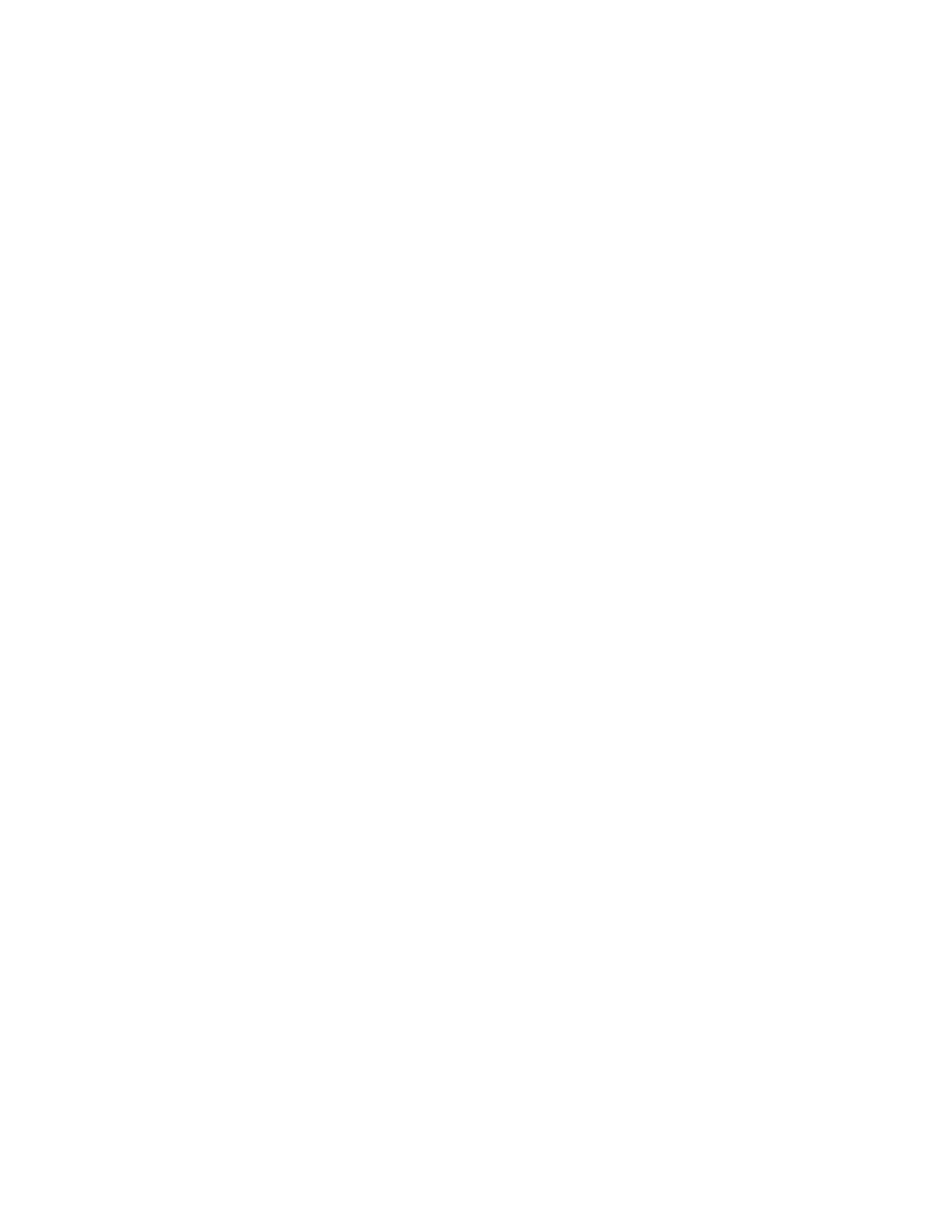}\\
\caption{AIRM of size $432 \times 176$.}
\label{sfig5}
\end{figure}
\end{example}
\begin{example}
\label{ex6}
Consider a SNI-SUICP with $K=432,D=255,U=15$. The capacity of this SNI-SUICP according to Theorem \ref{thm2} is $\frac{1}{256}$. For this SNI-SUICP, we have $D+1=256, \lambda_1=80,\lambda_2=16,\beta_0=1,\beta_1=2,\beta_2=5$ and $l=2$. AIRM of size $432 \times 256$ given in Fig. \ref{sfig6} can be used as an optimal length encoding matrix for this SNI-SUICP. 
\begin{figure}
\centering
\includegraphics[scale=0.4]{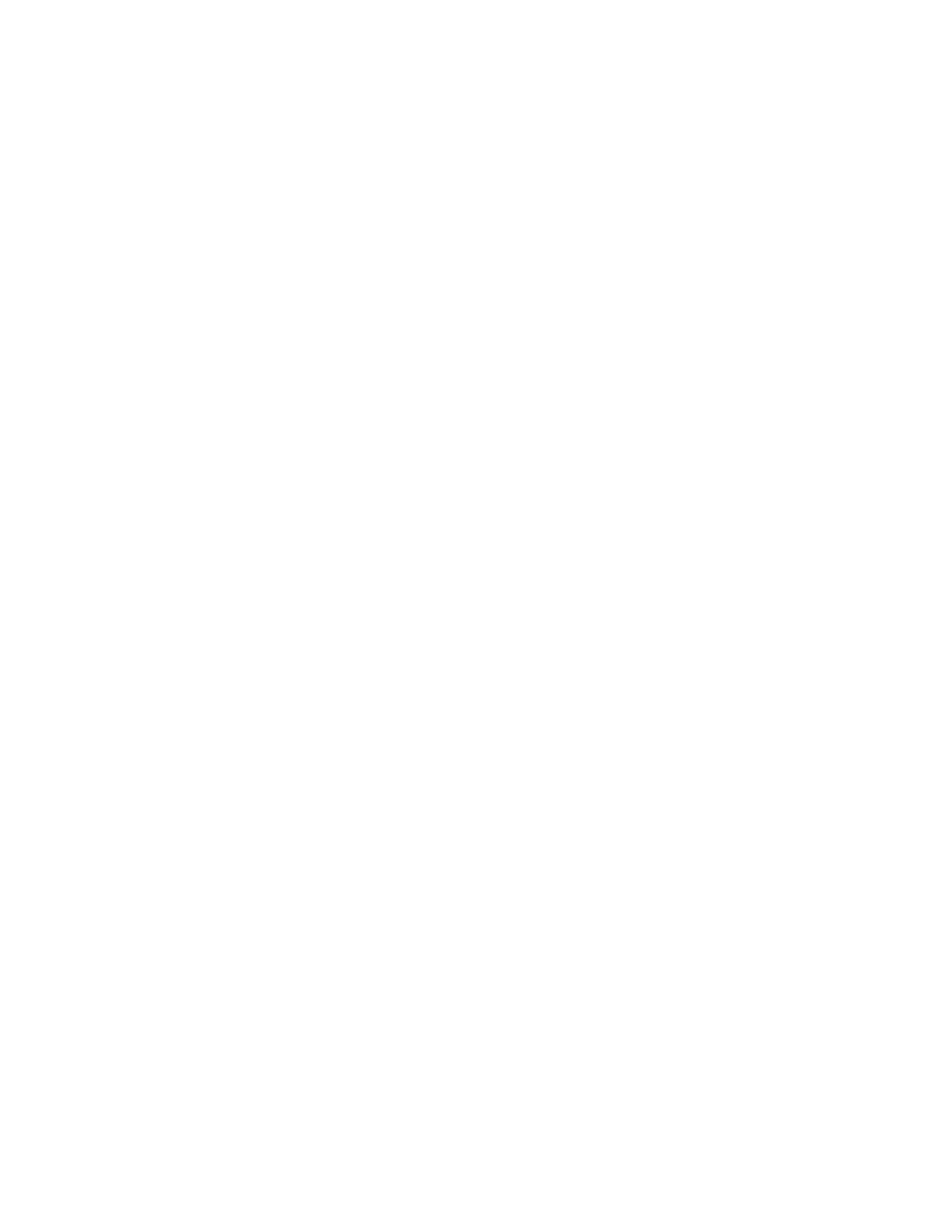}\\
\caption{AIRM of size $432 \times 256$.}
\label{sfig6}
\end{figure}
\end{example}

%%%%%%%%%%%%%%%%%%%%%%%%%%%
\section{Discussion}
\label{sec4}
In this paper, we derived the capacity of SNI-SUICP and proposed optimal length coding scheme to achieve the capacity. Some of the interesting directions of further research
are as follows:  
\begin{itemize}
\item The capacity and optimal coding for SNI-SUICP with arbitrary $U$ and $D$ is a challenging open problem.
\item Maleki \textit{et al.} \cite{MCJ} proved the capacity of X network setting with local connectivity and $ML$ number of messages when the number of source-receiver pairs ($M$) tends to infinity. The capacity of this network is $\frac{2}{L(L+1)}$ per message. However, for finite $M$, the capacity is upper bounded by $\frac{2}{L(L+1)}$ but unknown.
\end{itemize}

\section*{APPENDIX A}
\subsection*{Proof of Lemma \ref{lemma1} }
\textit{Case (i)}: $l$ is even and $k \in C_i$ for $i \in [0:\lceil\frac{l}{2}\rceil]$ or  $l$ is odd and $k \in C_i$ for $i \in [0:\lceil\frac{l}{2}\rceil-1]$.

In this case, from the definition of down distance, we have $L(k+d_{down}(k),k) \in \mathbf{I}_{ \lambda_{2i} \times \beta_{2i}\lambda_{2i}}$.
\begin{figure*}
\centering
\includegraphics[scale=0.67]{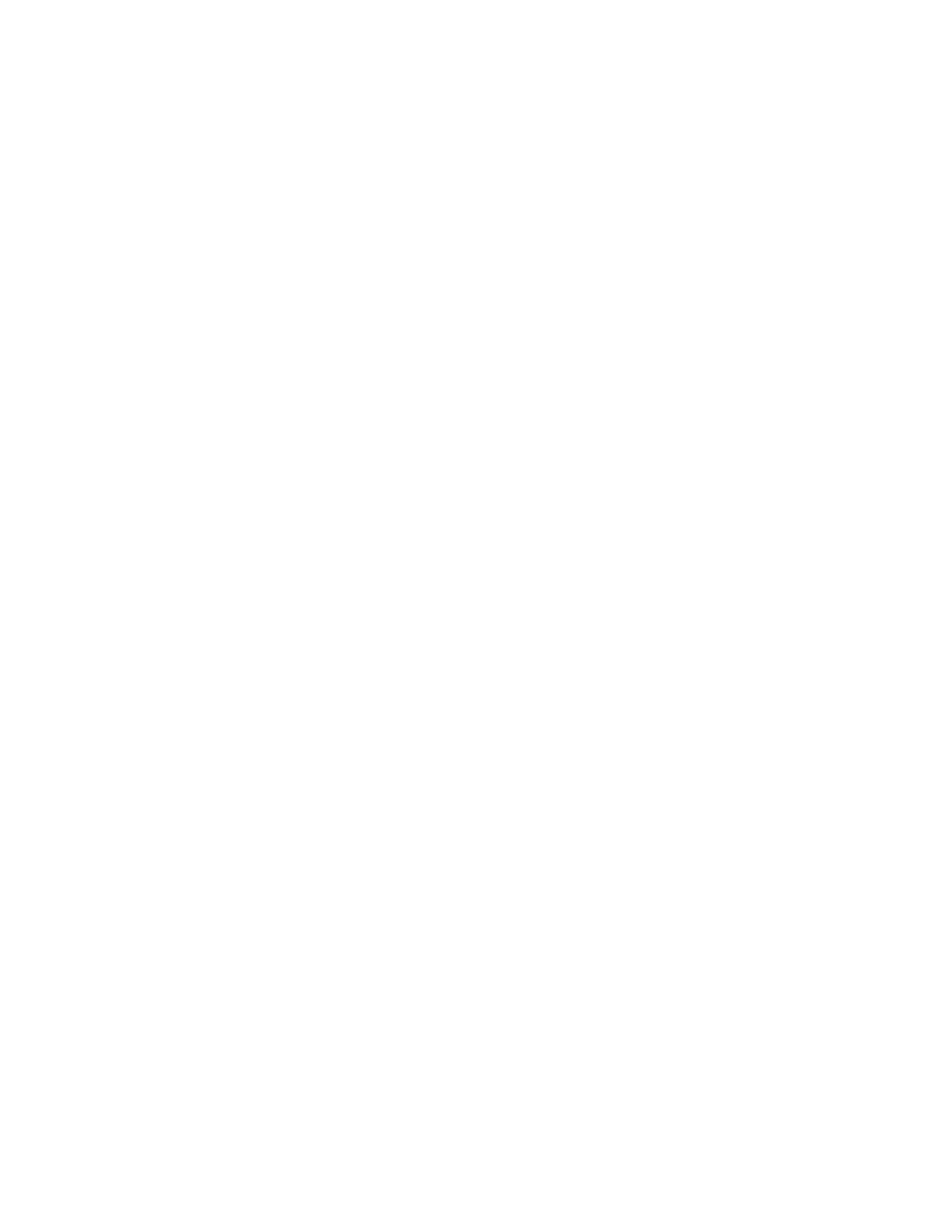}\\
\caption{Maximum-down distance calculation}
\label{afig1}
\end{figure*}
Let $k~\text{mod}~(D+1-\lambda_{2i-1})=c\lambda_{2i}+d$ for some positive integers $c$ and $d$ $(d<\lambda_{2i})$. 
From Figure \ref{afig1}, we have 
\begin{align}
\label{aeq1}
d_{down}(k)=d_1+d_2+d_3,
\end{align}
and 
\begin{align}
\label{aeq2}
\nonumber
&d_1=(D+1)-k, \\&
\nonumber
d_2=K-D-1-\lambda_{2i}, \\&
d_3=k-(D+1-\lambda_{2i-1})-c\lambda_{2i}.
\end{align}

By using \eqref{aeq1} and \eqref{aeq2}, we have 
\begin{align}
\label{aeq3}
\nonumber
d_{down}(k)&=d_1+d_2+d_3\\&
\nonumber
=(D+1)-k+\underbrace{K-D-1-\lambda_{2i}}_{d_2}\\&+\underbrace{k-(D+1-\lambda_{2i-1})-c\lambda_{2i}}_{d_3}\\&
\nonumber
=K-D-1+\lambda_{2i-1}-(c+1)\lambda_{2i}.
\end{align}

By replacing $\lambda_{2i-1}$ with $\beta_{2i}\lambda_{2i}+\lambda_{2i+1}$ in \eqref{aeq3}, we get 
\begin{align*}
d_{down}(k)=K-D-1+\lambda_{2i+1}+(\beta_{2i}-1-c)\lambda_{2i}.
\end{align*}
%%%%%%%%%%%%
\textit{Case (ii)}: $l$ is odd and $k \in C_{\lceil\frac{l}{2}\rceil}$.

In this case, from the definition of down distance, we have $L(k+d_{down}(k),k) \in \mathbf{I}_{\beta_{l}\lambda_{l} \times \lambda_{l}}$.
\begin{figure*}
\centering
\includegraphics[scale=0.67]{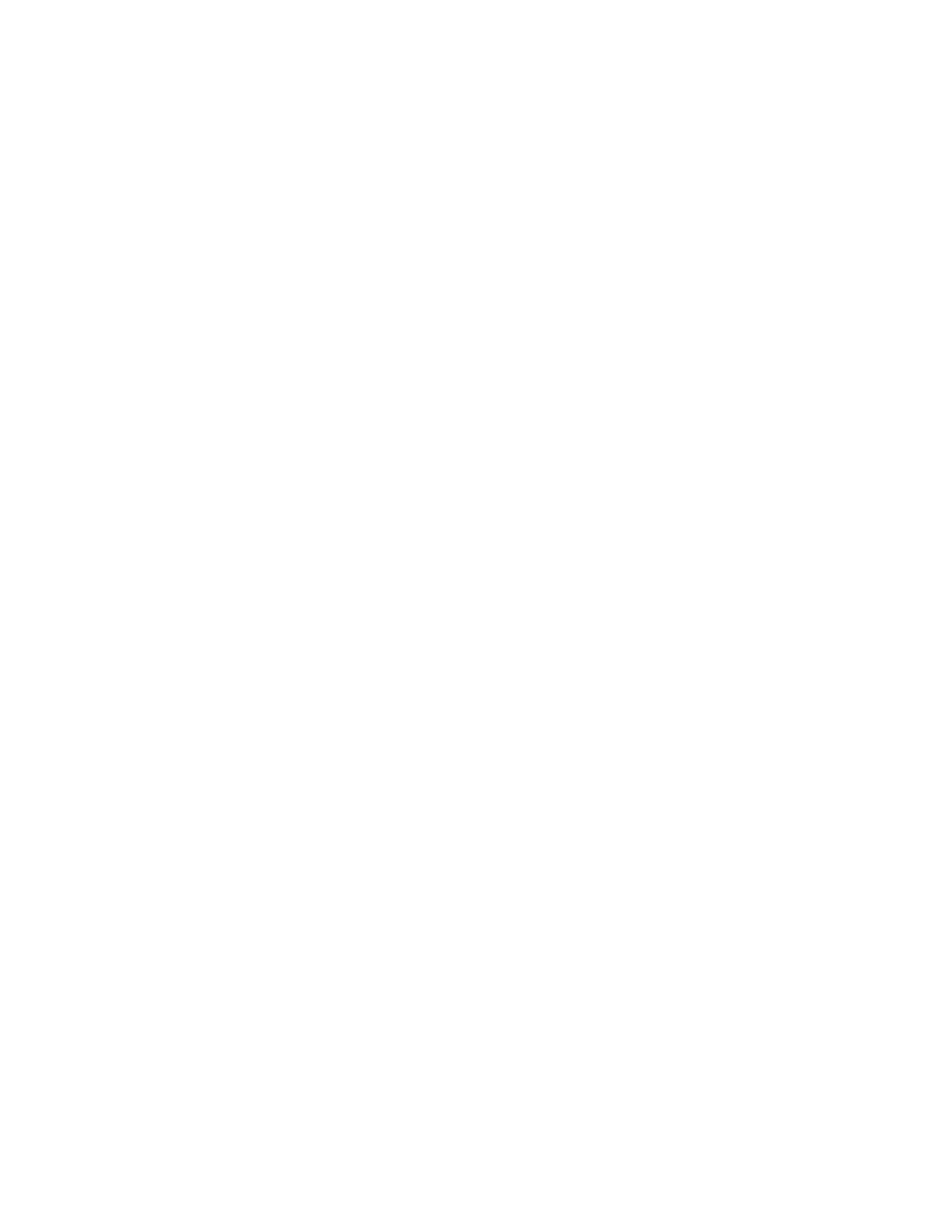}\\
\caption{Maximum-down distance calculation}
\label{afig2}
\end{figure*}
%%%%%%%%%%%%%%%%%%%%%%%%%%%%%%%%%%%%%%%%%%%%%%%%%%%%%%%%%%%%
%%%%%%%%%%%%%%%%%%%%%%%%%%%%%%%%%%%%%%%%%%%%%%%%%%%%%%%%%%%%
From Figure \ref{afig2}, we have 
\begin{align}
\label{aeq32}
d_{down}(k)=d_1+d_2+d_3,
\end{align}
and
\begin{align}
\label{aeq4}
\nonumber
&d_1=(D+1)-k, \\&
\nonumber
d_2=K-D-1-\beta_l\lambda_{l}, \\&
d_3=\beta_l\lambda_l-d_5.
\end{align}

We have $L(k,k) \in \mathbf{I}_{D+1}$ and $L(k+d_{down}(k),k) \in \mathbf{I}_{\lambda_l}$ of $\mathbf{I}_{\beta_{l}\lambda_{l} \times \lambda_{l}}$ as shown in Figure \ref{afig2}. Hence, we have $d_1=d_4$ and $d_4=d_5$.
By using \eqref{aeq32} and \eqref{aeq4}, we have 
\begin{align*}
%\label{aeq31}
\nonumber
d_{down}(k)&=d_1+d_2+d_3\\&
\nonumber
=d_1+\underbrace{K-D-1-\beta_l\lambda_{l}}_{d_2}+\underbrace{\beta_l\lambda_l-d_1}_{d_3}=K-D-1.
\end{align*}

For $i=\left \lceil \frac{l}{2} \right \rceil$, we have $\lambda_{2\left \lceil \frac{i}{2} \right \rceil}=\lambda_{2\left \lceil \frac{i}{2} \right \rceil+1}=0$. We can write $K-D-1$ as $K-D-1+\lambda_{2\left \lceil \frac{l}{2} \right \rceil+1}+(\beta_{2\left\lceil\frac{l}{2}\right\rceil}-1-c)\lambda_{2\left \lceil \frac{l}{2} \right \rceil}$. Hence 
\begin{align*}
d_{down}(k)=K-D-1+\lambda_{2i+1}+(\beta_{2i}-1-c)\lambda_{2i}.
\end{align*}

%%%%%%%%%%%%%%%%%%%%%%%%%%%%%%%%%%%%%%%%%%%%%%%%%%%%%%%%%

\section*{APPENDIX B}
%%%%%%%%%%%%%%%%%%%%%%%%%%%%%
It turns out that the interval $\tilde{C}_{i}$ defined in \eqref{fact0} for $i \in [0:\left\lceil\frac{l}{2}\right\rceil]$
needs to be partitioned into two  as $\tilde{C}_{i} =\tilde{D}_{i}\cup \tilde{E}_{i}$ as given below to prove the main result Theorem \ref{thm1}. Let 
\begin{align}
\label{fact12}
&\tilde{D}_i=[K-\lambda_{2i-1}:K-\lambda_{2i-1}+(\beta_{2i}-1)\lambda_{2i}-1] \\&
\tilde{E}_i=[K-\lambda_{2i-1}+(\beta_{2i}-1)\lambda_{2i}:K-\lambda_{2i+1}-1]. 
\end{align}
for $i \in [0:\left\lceil\frac{l}{2}\right\rceil]$.

%%%%%%%%%%%%%%%%%%%%%%%%%%%%
\subsection*{Proof of Theorem \ref{thm1}}
A scalar linear index code of length $D+1$ generated by an AIR matrix of size $K \times (D+1)$ is given by
\begin{align}
\label{code1}
 [c_0~c_1~\ldots~c_{D}]=[x_0~x_1~\ldots~x_{K-1}]\mathbf{L}=\sum_{k=0}^{K-1}x_kL_k
\end{align}
where $L_k$ is the $k$th row of $\mathbf{L}$ for $k\in[0:K-1]$. We prove that for $k \in [0:K-1]$, every receiver $R_{k}$ decodes its wanted message $x_{k}$ by using $[c_0~c_1~\ldots~c_D]$ and its side-information. 

{\bf {Case (i)}:} $k \in [0:\lambda_0-1]$

If $K-D-1 < \left\lceil\frac{K}{2}\right \rceil$, the broadcast symbol $c_{k}$ is given by $c_{k}=x_{k}+x_{k+D+1}$. In $c_{k}$, the message symbol $x_{k+D+1}$ is in the side-information of receiver $R_{k}$. Hence, $R_{k}$ can decode its wanted message symbol $x_{k}$ from $c_{k}$. 

If $K-D-1 \geq \left \lceil \frac{K}{2}\right \rceil$, we show that $R_k$ can decode $x_k$ from $c_{k~\text{mod}~(D+1)}$. In this case, from \eqref{chain}, we have $\beta_0=0$ and $\lambda_1=D+1$. 

If $k \leq D$ ($k~\text{mod}~(D+1)=k$), From Lemma \ref{lemma2}, we have 
\begin{align}
\label{fact52}
&d_{up}(k+D+1,k)=D+1.
\end{align}

This indicates that $x_{k+1},x_{k+2},\ldots,x_{k+D}$ are not present in $c_{k}$. From Lemma \ref{lemma1}, we have 
\begin{align}
\label{fact521}
\nonumber
d_{down}(k)&=K-D-1+\lambda_{2i+1}+(\beta_{2i}-1-c)\lambda_{2i}\\&
 \leq K-\lambda_l=K-\text{gcd}(K,D+1).
\end{align}

This indicates that $x_{k-\text{gcd}(K,D+1)+1},\ldots,x_{k-1}$ are not present in $c_{k}$. Hence, every message symbol in $c_{k}$ is in the side-information of $R_k$ excluding the message symbol $x_k$ and $R_k$ can decode $x_k$.

If $k \in [D+1:\lambda_0-1]$, From Lemma \ref{lemma2}, we have 
\begin{align*}
d_{up}(k,k~\text{mod}~(D+1))&=d_{up}(k+D+1,k~\text{mod}~(D+1))\\&=D+1.
\end{align*}

Hence, $c_{k~\text{mod}~(D+1)}$ does not contain message symbols from the set $\{x_{k-D},\ldots,x_{k-1}\}\cup \{x_{k+1},\ldots,x_{k+D}\}$ and $R_k$ can decode $x_k$ from $c_{k~\text{mod}~(D+1)}$.

%%%%%%%%%%%%%%%%%%%%%%%%%%%%%%%%%%%%%%%%%%%%%%%%%%%%%%%%%%%%%%%%%%%%%%%%%%%%%%%

{\bf Case (ii):} $k \in \tilde{D_{i}}$  for $i\in [0:\left \lceil \frac{l}{2}\right \rceil]$.

Let $k^{\prime}=k-\lambda_0$. In this case, we have $k^{\prime}_R \in [0:(\beta_{2i}-1)\lambda_{2i}-1]$  for $i\in [0:\left \lceil\frac{l}{2}\right \rceil]$. Let $k^{\prime}_R=c\lambda_{2i}+d$ for some positive integers $c$ and $d$ and $d<\lambda_{2i}$. From Lemma \ref{lemma3}, we have $\mu_{k^{\prime}}=\lambda_{2i}$, from Definition \ref{def1}, we have  $t_{k^{\prime},r}=0$ for $r \in [1:p_k]$. From Lemma \ref{lemma1}, we have 
\begin{align}
\label{fact1}
\nonumber
d_{down}(k^{\prime})&=K-D-1+\lambda_{2i+1}+(\beta_{2i}-1-c)\lambda_{2i}\\&
=K-D-1+\lambda_{2i-1}-(c+1)\lambda_{2i}.
\end{align}
From Lemma \ref{lemma2}, we have 
\begin{align}
\label{fact2}
\nonumber
&d_{up}(k^{\prime}+d_{down}(k^{\prime}),k^{\prime})=\lambda_{2i-1}-c\lambda_{2i}\\& d_{up}(k^{\prime}+d_{down}(k^{\prime}),k^{\prime}+\mu_{k^{\prime}})=\lambda_{2i-1}-(c+1)\lambda_{2i}.
\end{align}
From \eqref{fact1} and \eqref{fact2}
\begin{align}
\label{fact3}
\nonumber
d_{down}(k^{\prime})-d_{up}(k^{\prime}&+d_{down}(k^{\prime}),k^{\prime}+\mu_{k^{\prime}})\\&=K-D-1.
\end{align}

This indicates that $x_{k}$ is present in the code symbol $c_{k^{\prime}+\mu_{k^{\prime}}}$ and among $D$ interfering messages after $x_k$ ($x_{k+1},x_{k+2},\ldots,x_{k+D}$), only  $x_{k^{\prime}+d_{down}(k^{\prime})}$ is present in $c_{k^{\prime}+\mu_{k^{\prime}}}$. Fig. \ref{sfig31} and \ref{sfig3} illustrate this.
From \eqref{fact2}, 
\begin{align}
\label{fact4}
\nonumber
d_{up}(k^{\prime}+d_{down}(k^{\prime}),k^{\prime})-d_{up}(k^{\prime}&+d_{down}(k^{\prime}),k^{\prime}+\mu_{k^{\prime}})\\&=\lambda_{2i} \geq gcd(K,D+1).
\end{align}

This along with \eqref{fact3} indicates that every message symbol in $c_{k^{\prime}}$ is in the side-information of $R_{k}$ except $x_{k^{\prime}+d_{down}(k^{\prime})}$. Fig. \ref{sfig31} and \ref{sfig3} illustrate this. Hence, every message symbol in $c_{k^{\prime}}+c_{k^{\prime}+\mu_{k^{\prime}}}$ is in the side-information of $R_{k}$ and $R_{k}$ decodes $x_{k}$. 
\begin{figure*}
\centering
\includegraphics[scale=0.50]{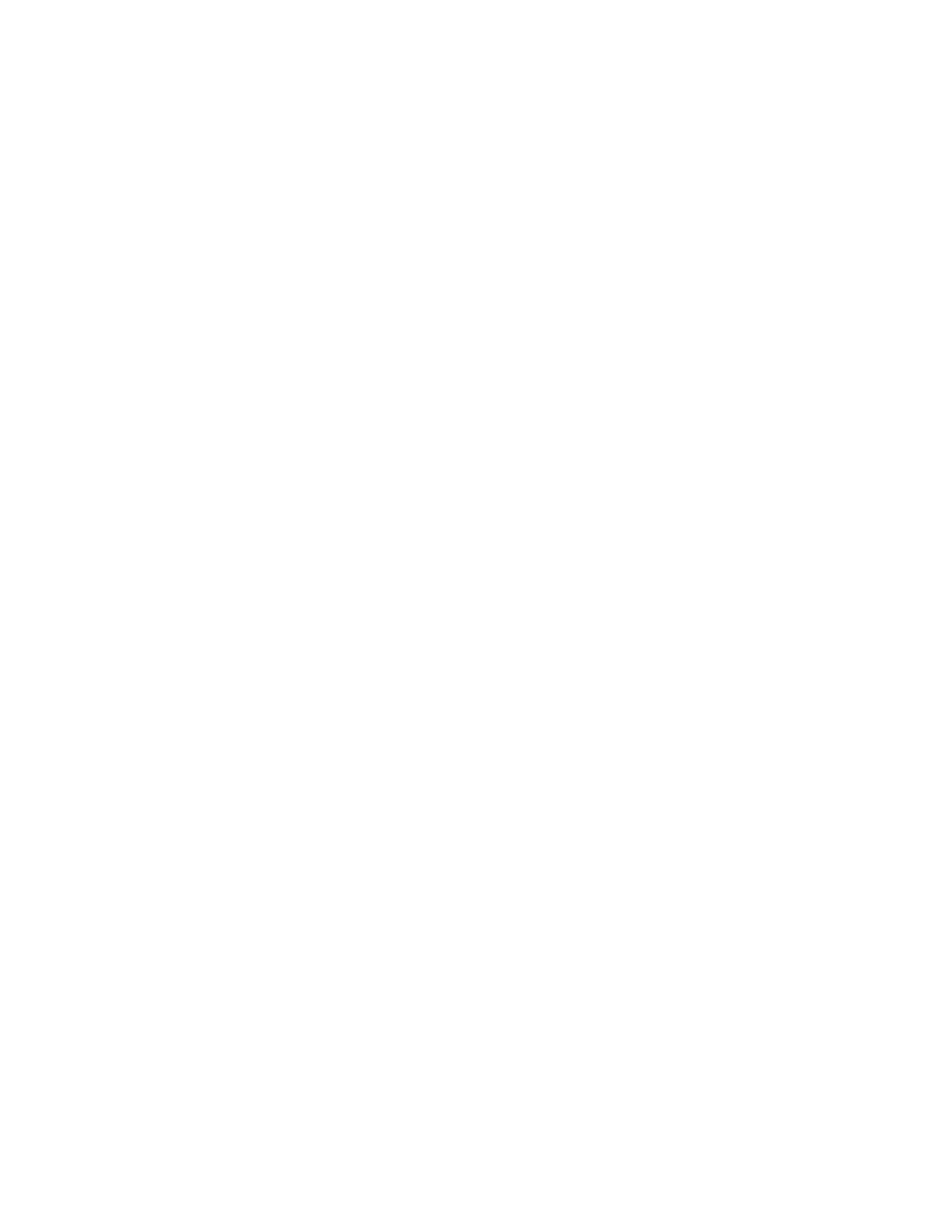}\\
\caption{Decoding for $k=k^{\prime}+\lambda_0 \in \tilde{C}_i$.}
\label{sfig31}
\end{figure*}

\begin{figure}
\centering
\includegraphics[scale=0.70]{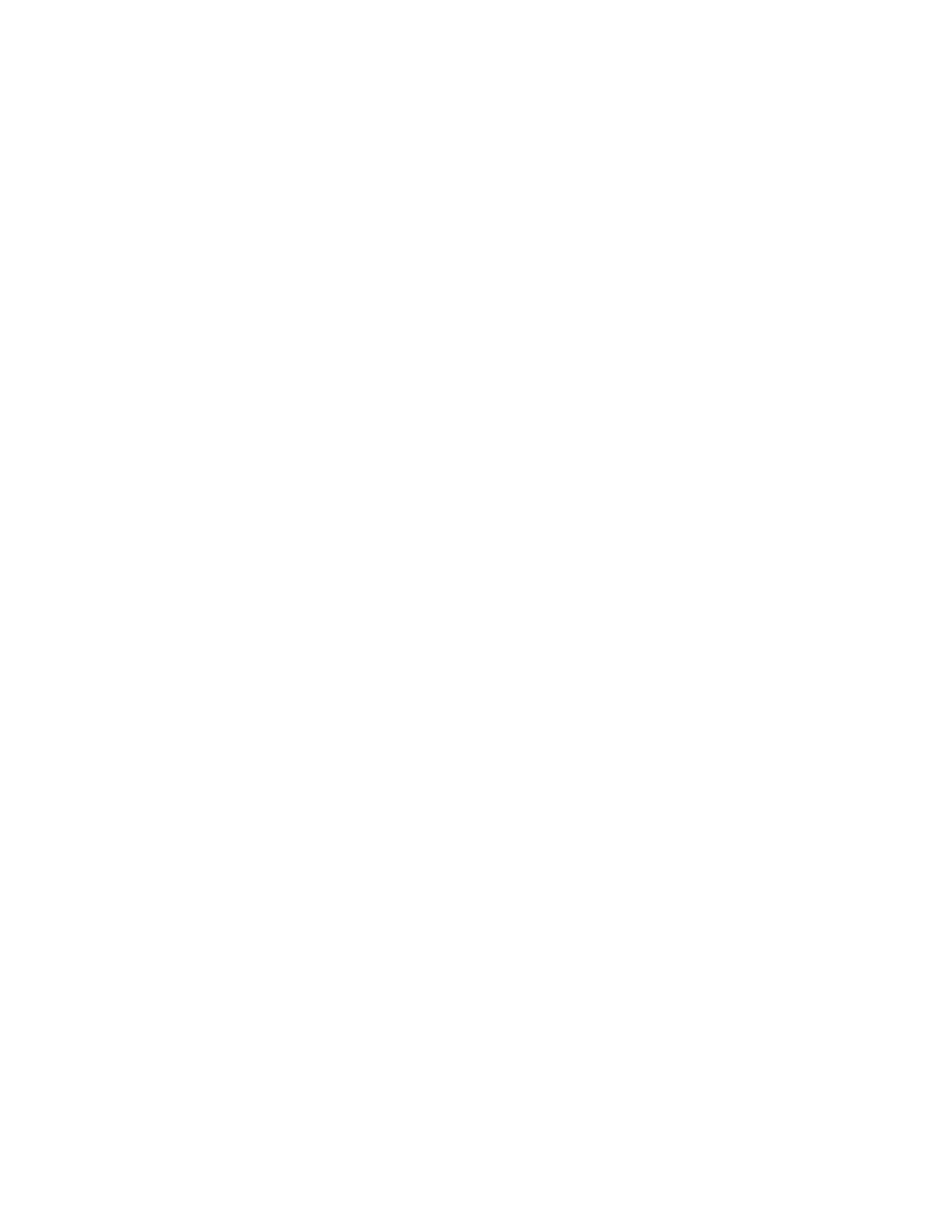}\\
\caption{Decoding for $k=k^{\prime}+\lambda_0 \in \tilde{D}_i$.}
\label{sfig3}
\end{figure}
%%%%%%%%%%%%%%%%%%%%%%%%%%%%%%%%%%%%%%%%%%%%%%%%%%%%%%%

{\bf Case (iii):} $k \in \tilde{E_{i}}$  for $i\in [0:\lceil\frac{l}{2}\rceil-1]$.

Let $k^{\prime}=k-\lambda_0$. In this case, we have $k^{\prime}_R \in [(\beta_{2i}-1)\lambda_{2i}:\beta_{2i}\lambda_{2i}-1]$ for $i\in [0:\lceil\frac{l}{2}\rceil]$. Let $k^{\prime}_R=(\beta_{2i}-1)\lambda_{2i}+c\lambda_{2i+1}+d$ for some positive integers $c,d \ (d<\lambda_{2i+1})$. We have $k^{\prime}=D+1-\lambda_{2i-1}+k^{\prime}_R$. From  Lemma \ref{lemma1}, we have 
\begin{align}
\label{fact5}
d_{down}(k^{\prime})=K-D-1+\lambda_{2i+1}. 
\end{align}
From Lemma \ref{lemma3}, we have 
\begin{align}
\label{fact6}
\nonumber
\mu_{k^{\prime}}&=d_{right}(k^{\prime}+d_{down}(k^{\prime}),k^{\prime})\\&
\nonumber
=d_{right}(D+1-\lambda_{2i-1}+k^{\prime}_R\\&
\nonumber 
~~~~~~~~~~~~+K-D-1+\lambda_{2i+1},k^{\prime})\\&
\nonumber
=d_{right}(K-\lambda_{2i}+c\lambda_{2i+1}+d,k^{\prime})\\&
=\lambda_{2i}-c\lambda_{2i+1}.
\end{align}
From Lemma \ref{lemma2}, we have
\begin{align}
\label{fact7}
d_{up}(k^{\prime}+d_{down}(k^{\prime}),k^{\prime}+\mu_{k^{\prime}})=\lambda_{2i+1}.
\end{align}
From \eqref{fact1} and \eqref{fact2}
\begin{align}
\label{fact8}
\nonumber
d_{down}(k^{\prime})-d_{up}(k^{\prime}&+d_{down}(k^{\prime}),k^{\prime}+\mu_{k^{\prime}})\\&=K-D-1.
\end{align}

This indicates that $x_{k}$ is present in the code symbol $c_{k^{\prime}+\mu_{k^{\prime}}}$ and among $D$ interfering messages after $x_k$ ($x_{k+1},x_{k+2},\ldots,x_{k+D}$), the interfering messages $x_{k^{\prime}+d_{down}(k^{\prime})}$ and $x_{k^{\prime}+t_{k^{\prime},r}+d_{down}(k^{\prime})}$ for $r \in [1:p_{k^{\prime}}]$ are present in $c_{k^{\prime}+\mu_{k^{\prime}}}$. Fig. \ref{sfig4} is useful to understand this.

From Lemma \ref{lemma1} and Definition \ref{def1}, $k^{\prime}+d_{down}(k^{\prime})+t_{k^{\prime},p_{k^{\prime}}}$ is always less than the number of rows in the matrix $\mathbf{L}$. That is, $k^{\prime}+t_{k^{\prime},p}+d_{down}(k^{\prime})<K$. Hence, we have 
\begin{align}
\label{fact9}
\nonumber
t_{k^{\prime},p}&< K-k^{\prime}-d_{down}(k^{\prime})\\&
\nonumber
=K-(D+1-\lambda_{2i-1}+(\beta_{2i}-1)\lambda_{2i}+c\lambda_{2i+1}+d)-\\&
\nonumber
~~~~~~~~~~~~~~~~~~~~~~~~~~~~~~~~(K-D-1+\lambda_{2i+1})\\&
=\lambda_{2i}-c\lambda_{2i+1}-d
\end{align}
From \eqref{fact6} and \eqref{fact9}
\begin{align}
\label{relation2}
t_{k^{\prime},p_{k^{\prime}}} < \mu_{k^{\prime}}-d.
\end{align}
From \eqref{fact9}, we have 
\begin{align*}
k^{\prime}_R+t_{k^{\prime},p_{k^{\prime}}}&<k^{\prime}_R+\lambda_{2i}-c\lambda_{2i+1}-d\\&=\underbrace{(\beta_{2i}-1)\lambda_{2i}+c\lambda_{2i+1}+d}_{k^{\prime}_R}+\lambda_{2i}-c\lambda_{2i+1}-d\\&=\beta_{2i}\lambda_{2i}.
\end{align*}
Hence, 
\begin{align*}
k^{\prime}_R+t_{k^{\prime},p_{k^{\prime}}} \in [(\beta_{2i}-1)\lambda_{2i}:\beta_{2i}\lambda_{2i}-1]
\end{align*}
and
\begin{align}
\label{relation3}
\mathbf{L}(k^{\prime}+t_{k^{\prime},r}+d_{down}(k^{\prime}+t_{k^{\prime},r}),k^{\prime}+t_{k^{\prime},r}) \in \mathbf{I}_{\lambda_{2i} \times  \beta_{2i} \lambda_{2i} }
\end{align}
for $r \in [1:p_{k^{\prime}}]$.
We have  
\begin{align}
d_{down}(k^{\prime})=d_{down}(k^{\prime}+t_{k^{\prime},r})
\end{align}
for $r \in [1:p_{k^{\prime}}]$. 

From Lemma \ref{lemma2}, for $\mathbf{L}(k^{\prime}+t_{k^{\prime},r}+d_{down}(k^{\prime}+t_{k,r}),k^{\prime}+t_{k^{\prime},r})$ for $i\in [0:\lceil\frac{l}{2}\rceil]$,
\begin{align}
\label{fact10}
d_{up}(k^{\prime}+t_{k^{\prime},r}+d_{down}(k^{\prime}+t_{k^{\prime},r}),k^{\prime}+t_{k^{\prime},r})=\lambda_{2i}+\lambda_{2i+1}.
\end{align}
From \eqref{fact7} and \eqref{fact10}, we have 
\begin{align}
\label{fact15}
&d_{up}(k^{\prime}+t_{k^{\prime},r}+d_{down}(k^{\prime}+t_{k,r}),k^{\prime}+t_{k^{\prime},r})-\\&d_{up}(k^{\prime}+d_{down}(k^{\prime}),k^{\prime}+\mu_{k^{\prime}})=\lambda_{2i}\geq ~gcd(K,D+1),
\end{align}
this along with \eqref{fact8} indicates that every message symbol in $c_{k^{\prime}+t_{k^{\prime},r}}$ is in the side-information of $R_{k}$ except $x_{k^{\prime}+t_{k^{\prime},r}+d_{down}(k^{\prime})}$. Fig. \ref{sfig4} is useful to understand this. We have $k^{\prime}_R \in [(\beta_{2i}-1)\lambda_{2i}:\beta_{2i}\lambda_{2i}-1]$, $d_{up}(k^{\prime}+d_{down}(k^{\prime}),k^{\prime})=\lambda_{2i}+\lambda_{2i+1}$. From \eqref{fact10}, we have
\begin{align}
\label{fact11}
\nonumber
d_{up}(k^{\prime}+d_{down}(k^{\prime}),k^{\prime})-d_{up}(k^{\prime}&+d_{down}(k^{\prime}),k^{\prime}+\mu_{k^{\prime}})\\&=\lambda_{2i}\geq~gcd(K,D+1).
\end{align}

This along with \eqref{fact8} indicates that every message symbol in $c_{k^{\prime}}$ is in the side-information of $R_{k}$ except $x_{k^{\prime}+d_{down}(k^{\prime})}$. 

From \eqref{fact8},\eqref{fact15} and \eqref{fact11}, the interfering message symbol $x_{k^{\prime}+t_{k^{\prime},r}+d_{down}(k^{\prime})}$ in $c_{k^{\prime}+\mu_{k^{\prime}}}$ can be canceled by adding the index code symbol $c_{k^{\prime}+t_{k^{\prime},r}}$  for $r \in [1:p_k]$ and the interfering message symbol $x_{k^{\prime}+d_{down}(k^{\prime})}$ in $c_{k^{\prime}+\mu_{k^{\prime}}}$ can be canceled by adding the index code symbol $c_{k^{\prime}}$. 

Hence, receiver $R_k$ decodes the message symbol $x_k$ by adding the index code symbols $c_{k^{\prime}},c_{k^{\prime}+\mu_{k^{\prime}}}$ and  $c_{k^{\prime}+t_{k^{\prime},r}}$ for $r \in [1:p_{k^{\prime}}]$. 
\begin{figure}
\centering
\includegraphics[scale=0.60]{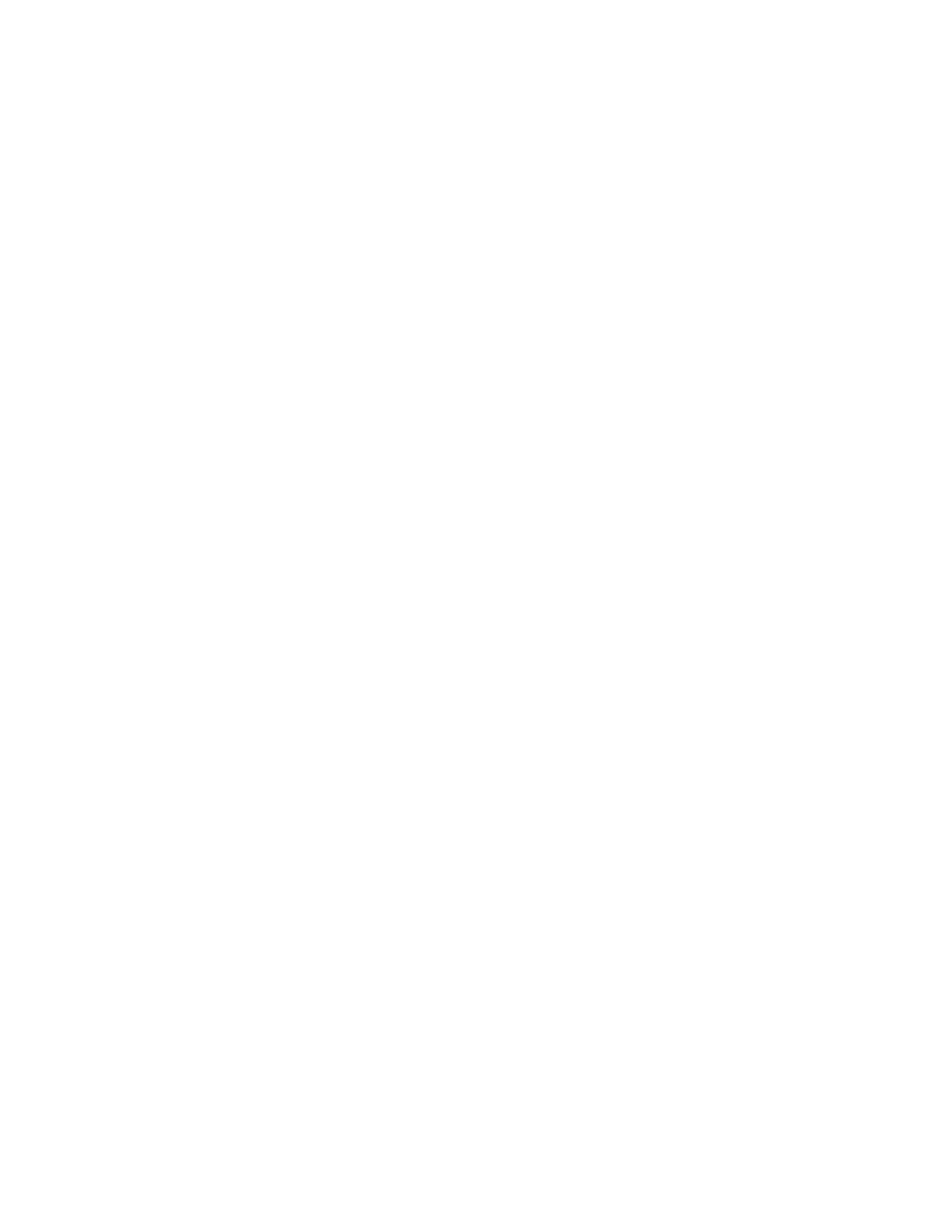}\\
\caption{Decoding for $k=k^{\prime}+\lambda_0 \in \tilde{E}_i$.}
\label{sfig4}
\end{figure}

{\bf Case (iv):} $k \in [K-\lambda_l:K-1]= \tilde{E}_{i}$  for $i=\left\lceil\frac{l}{2}\right\rceil$.
Let $k^{\prime}=k-\lambda_0$.
In this case, from Lemma \ref{lemma1}, we have 
\begin{align}
\label{fact51}
d_{down}(k^{\prime})=K-D-1=\lambda_0.
\end{align}

This indicates that $x_k$ is present in $c_{k^{\prime}}$ and $x_{k+1},x_{k+2},\ldots,x_{k+D}$ are not present in $c_{k^{\prime}}$. From Lemma \ref{lemma3}, we have 
\begin{align*}
d_{up}(k)=\lambda_l=\text{gcd}(K,D+1).
\end{align*}

This indicates that $x_{k-\text{gcd}(K,D+1)+1},\ldots,x_{k-1}$ are not present in $c_{k^{\prime}}$. Hence, every message symbol in $c_{k^{\prime}}$ is in the side-information of $R_k$ excluding the message symbol $x_k$ and $R_k$ can decode $x_k$. From \eqref{fact0} and \eqref{fact12}, case (i), case (ii), case (iii) and case(iv) span $k \in [0:K-1]$. This completes the proof.

\section*{Acknowledgment}
This work was supported partly by the Science and Engineering Research Board (SERB) of Department of Science and Technology (DST), Government of India, through J.C. Bose National Fellowship to B. Sundar Rajan.

%%%%%%%%%%%%%%%%%%%%%%%%%%%%%%%%%%%%%%%%%%%%%%%%%%%%%%%%%%%%

%%%%%%%%%%%%%%%%%%%%%%%%%%%%%%%%%%%%%%%%%%%%%%%
\end{document}